\documentclass[letterpaper, 10 pt, conference]{ieeeconf} 
\IEEEoverridecommandlockouts                            
\overrideIEEEmargins  

\usepackage{array}
\usepackage{amsmath}
\usepackage{amssymb}
\usepackage{mathtools}
\usepackage{url}
\usepackage{graphicx}
\usepackage[linesnumbered,ruled]{algorithm2e}
\usepackage[usenames,dvipsnames]{color}
\usepackage{xcolor}
\usepackage{float}
\usepackage[english]{babel}
\usepackage{bm}
\usepackage{tikz}
\usepackage[flushleft]{threeparttable}
\usetikzlibrary{matrix}
\usepackage[font={small,it}]{caption}
\usepackage{subcaption}
\newcommand{\RNum}[1]{\uppercase\expandafter{\romannumeral #1\relax}}
\newtheorem{theorem}{Theorem}[section]

\newtheorem{lemma}[theorem]{Lemma}
\newtheorem{definition}[theorem]{Definition}

\newtheorem{example}[theorem]{Example}

\title{Hierarchical $\mathcal{H}_{2}$ Control of Large-Scale Network Dynamic Systems}

\author{Nan~Xue~and~Aranya~Chakrabortty
\thanks{N. Xue and A. Chakrabortty are with the Department
of Electrical and Computer Engineering, North Carolina State University, Raleigh,
NC, 27695 USA, e-mail: nxue@ncsu.edu, achakra2@ncsu.edu }
\thanks{The work is supported partly by the US National Science Foundation (NSF) under grant ECCS 1054394.}
}

\begin{document}

\maketitle

\begin{abstract}

Standard $\mathcal{H}_{2}$ optimal control of networked dynamic systems tend to become unscalable with network size. Structural constraints can be imposed on the design to counteract this problem albeit at the risk of making the solution non-convex. In this paper, we present a special class of structural constraints such that the $\mathcal{H}_{2}$ design satisfies a quadratic invariance condition, and therefore can be reformulated as a convex problem. This special class consists of structured and weighted projections of the input and output spaces. The choice of these projections can be optimized to match the closed-loop performance of the reformulated controller with that of the standard $\mathcal{H}_{2}$ controller. The advantage is that unlike the latter, the reformulated controller results in a hierarchical implementation which requires significantly lesser number of communication links, while also admitting model and controller reduction that helps the design to scale computationally. We illustrate our design with simulations of a $500$-node network.

\end{abstract}
{\keywords Optimal Control, Quadratic Invariance, Network Systems, Controller Reduction.}

\IEEEpeerreviewmaketitle

\section{Introduction}

Driven by recent advances in cyber-physical systems, control synthesis for large-scale network dynamic systems has become an increasingly prevalent topic in the control and network communities \cite{nantac}. Physical networks such as power systems, wireless sensor networks, or the recently emerging internet-of-things consist of numerous heterogeneous subsystems that may be defined across complex topologies and wide geographical spans. The typical numbers of subsystems in these networks can scale from thousands to millions, making the design of tractable control mechanisms very challenging. To tackle the curse of dimension, traditional model reduction-based techniques such as singular perturbations \cite{singular}, balanced truncation \cite{moore}, and controller reduction \cite{obinata} have been developed decades ago to facilitate the analysis and control of large-scale systems, in general. However, these techniques are not readily applicable to networks as their control schemes are mostly unstructured, and hence agnostic of the network topology that contains important constraints for both design and implementation. One such pertinent constraint is communication. For example, conventional feedback controllers such as $\mathcal{H}_{2}$ optimal controllers are defined over unstructured dense transfer matrices, and thus implementing this control would necessitate an impractically large number of communication links for even a moderately sized network.

Starting from the idea of decentralized control \cite{siljak}, several seminal papers have developed design tools to impose desired implementation structure on controllers. One of them, the work in \cite{qinv}, states that a decentralized optimal control admits a convex reformulation if the structural constraint on the controller is quadratic invariant (QI) under the plant model. Built around quadratic invariance, papers including \cite{qinv}, \cite{lamperski} and \cite{sabau} incorporate sparsity structures in the controller, aiming to reduce the communication density and to cope with delays. The restriction of these designs is that the sparsity pattern meeting the QI condition is highly dependent on the sparsity of the open-loop network, and therefore, the choice of sparsity can be limited. Moreover, the design complexities of these controllers are dictated by the order of the open-loop plant, and thus can become unscalable for very large-scale networks. A suboptimal design is proposed in \cite{sparse} to generalize the choice of sparsity structure by relaxation algorithms using $l_{1}$-weighted norm. The design, however, is still computationally demanding.

Motivated by these challenges, in this paper we present a special class of structural constraint for designing $\mathcal{H}_{2}$ controllers. The constraint set is defined by two projections over the input and output spaces, which result in a hierarchical implementation of the controller. The implementation mechanism works as follows. Selected network subsystems send their output measurements to a designated set of coordinators. These coordinators take the average of these measurements, and exchange the information between themselves to generate a $\mathcal{H}_{2}$ control law. Each coordinator, thereafter, broadcasts this control signal to its respective set of subsystems. The overall execution, therefore, is completely hierarchical, and requires lesser number of communication links. Preliminary work incorporating this hierarchy has been proposed in our recent paper \cite{nanacc} for a LQG controller built upon the idea of clustering based projection. However, the design there is restricted by the choice of certain design parameters that guarantee closed-loop stability. In this paper we formalize the design in \cite{nanacc} into a general $\mathcal{H}_{2}$ framework, where such restrictions do not exist. The hierarchical structure is proven to meet the QI condition irrespective of the structure of the plant, as a result of which, the control problem is reformulated as a standard unconstrained $\mathcal{H}_{2}$ control. The reformulated problem preserves the hierarchical structure inside its input and output matrices, and thus allows for model and controller reduction techniques to reduce the design complexity without breaking the structure of the controller. The reduction technique adopted in this paper is Hamiltonian-based approximation for solving the underlying algebraic Riccati equations (ARE), as proposed in \cite{are1}. The recent paper \cite{local} also addresses similar goals as ours using receding-horizon control, but the scalability of their controller is subject to the sparsity structure of the open-loop network. Related hierarchical designs have been proposed in \cite{hier1,hier2}, however, they do not exploit the convex reformulation provided by quadratic invariance.

The remainder of this paper is organized as follows. Section \RNum{2} defines the hierarchical constraint, and formulates the problem of hierarchical $\mathcal{H}_{2}$ control. The same problem is parameterized and reformulated into an unconstrained $\mathcal{H}_{2}$ control via quadratic invariance in Section \RNum{3}. Section \RNum{4} presents a Hamiltonian-based reduction technique to simplify the design of the reformulated controller. The optimality gap between the hierarchical $\mathcal{H}_{2}$ and unconstrained $\mathcal{H}_{2}$ problems is discussed in Section \RNum{5}, where we also propose a design on clustering sets to tighten the gap. Validations of our proposed designs are illustrated via simulations of large-scale networks in Section \RNum{6}. Section \RNum{7} concludes the paper and presents some future works.

{\bf{Notation\ }} 
The following notations will be used throughout this paper: $|\mathcal{I}|_{c}$: cardinality of a set $\mathcal{I}$, $diag(m)$: diagonal matrix with vector $m$ on its principal diagonal, $diag(M,N)$: block-diagonal matrix with matrices $M$ and $N$ on its principal diagonal, $tr(M)$: trace operation on a matrix $M$, $\| M \|_{F}$: Frobenius norm of a matrix $M$, i.e. $\| M \|_{F}=\sqrt{tr(MM^{T})}$. A transfer function matrix (TFM) is defined as $g(s)=C(sI-A)^{-1}B+D$, with a realization form of $g(s)=\left[
\begin{array}{c|c}
A & B \\ \hline
C & D
\end{array}
\right]
$. $\mathbb{R}_{sp}$ and $\mathbb{R}_{p}$ denote respectively the set of real-rational strictly proper TFMs and the set of real-rational proper TFMs. $\mathcal{RH}_{\infty}$ denotes the set of all stable TFMs from $\mathbb{R}_{p}$, and $\mathcal{RH}_{2}$ denotes the set of all stable TFMs from $\mathbb{R}_{sp}$. The $\mathcal{H}_{\infty}$ norm of $g(s) \in \mathcal{RH}_{\infty}$ is defined by $\|g(s)\|_{\mathcal{H}_{\infty}}=sup_{\omega}\ \bar{\sigma}[ g(j\omega)] $, where $\bar{\sigma}$ denotes the largest singular value. The $\mathcal{H}_{2}$ norm of $g(s) \in \mathcal{RH}_{2}$ is defined by $\|g(s)\|_{\mathcal{H}_{2}}=\sqrt{\int_{-\infty}^{\infty}tr[g^{*}(t)g(t)]\mathrm{d}t}=\sqrt{\frac{1}{2\pi}\int_{-\infty}^{\infty}tr[g^{*}(j\omega)g(j\omega)]\mathrm{d}\omega}$.

Furthermore, a Lyapunov equation is a matrix equation in the form of 
\begin{align*}
A\Phi + \Phi A^{T} + BB^{T} = 0,
\end{align*}
and an algebraic Riccati equation (ARE) is written as
\begin{align*}
A^{T}X + X A + C^{T}C + XBD^{-1}B^{T}X = 0.
\end{align*}
We use $\Phi = \mathrm{LYAP}(A,B)$ and $X = \mathrm{ARE}(A,B,C,D)$ to respectively denote solutions from these two equations.

\section{Problem Formulation}

We motivate our design from the standard formulation of $\mathcal{H}_{2}$ optimal control. Consider a transfer function matrix $G(s)$ in a realization form 
\begin{align}
G(s) = \begin{bmatrix}
G_{11} & G_{12} \\ G_{21} & G_{22}
\end{bmatrix} = \left[
\begin{array}{c|cc}
A & B_{1} & B_{2} \\ \hline
C_{1} & 0 & D_{12} \\
C_{2} & D_{21} & 0
\end{array}
\right], \label{full}
\end{align}
where $G_{22}$ is the plant model with $A\in \mathbb{R}^{n\times n}$, $B_{2}\in \mathbb{R}^{n\times n_{u}}$ and $C_{2}\in \mathbb{R}^{n_{y}\times n}$. The plant $G_{22}$ is defined over $n_{s}\leq n$ interconnected subsystems with each modeled by
\begin{equation}
\begin{aligned}
\dot{x}_{i} (t) & = A_{ii}x_{i}(t) + \sum_{i\neq j}^{n_{s}} A_{ij}x_{j}(t) + B_{2,ii} u_{i}(t) \\
y_{i}(t) & = C_{2,ii}x_{i}(t), \quad i=1,...,n_{s}
\end{aligned}\quad , \label{subsys}
\end{equation}
where $A_{ii}$, $B_{2,ii}$ and $C_{2,ii}$ are submatrices with compatible dimensions from $A$, $B_{2} = diag(B_{2,11},...,B_{2,n_{s}n_{s}})$ and $C_{2} = diag(C_{2,11},...,C_{2,n_{s}n_{s}})$. Note that for $n_{y},n_{u}\leq n$, $B_{2,ii}$ and $C_{2,ii}$ can be null matrices, that is, a subsystem can have no input or output. Consider a standard $\mathcal{H}_{2}$ optimal control for $G(s)$, we begin with the following assumptions.

A1. $(A,B_{2})$ is stabilizable and $(C_{2},A)$ is detectable

A2. $D_{21}D_{21}^{T} \succ 0$ and $D_{12}^{T}D_{12}\succ 0$ 

A3. $(A,B_{1})$ and $(C_{1},A)$ have no uncontrollable or unobservable modes on the imaginary axis  

A4. $D_{12}^{T}C_{1}=0$ and $B_{1}D_{21}^{T}=0$ \\
Assumptions (A1-A3) are standard assumptions to formulate $\mathcal{H}_{2}$ optimal control \cite{rc}. Assumption (A4) is made to simplify the notations for presenting the technical results. Our proposed design is still valid if this assumption is removed. Let $K(s) \in \mathbb{R}_{p}^{n_{u}\times n_{y}}$ denote the controller of interest. The lower linear fractional transformation (LFT) of $G(s)$ and $K(s)$ is defined by
\begin{align}
f(G,K) := G_{11} + G_{12}K(I-G_{22}K)^{-1}G_{21}.
\end{align}
With these notations, the constrained $\mathcal{H}_{2}$ optimal control problem considered in this paper can be posed as
\begin{equation}
\begin{aligned}
& \mathrm{minimize}
& & \| f(G,K)\|_{\mathcal{H}_{2}} \\
& \mathrm{subject\ to}
& & \text{$K$ stabilizes $G$} \\
& & & K\in \mathcal{S}
\end{aligned}, 
\label{op1}
\end{equation}
where $\mathcal{S} \subseteq \mathbb{R}_{p}^{n_{u}\times n_{y}}$ represents the subset of all strictly proper transfer function matrices that satisfy certain structural constraints. Note that when $\mathcal{S} = \mathbb{R}_{p}^{n_{u}\times n_{y}}$, (\ref{op1}) becomes the unconstrained $\mathcal{H}_{2}$ design. In practice, the unconstrained case usually yields a controller defined over a dense transfer function matrix. As a result, implementation of this controller for any large-scale network dynamic systems where $n$, $n_{u}$ and $n_{y}$ may scale up to millions, would necessitate an impractically large number of communication links. To bypass this challenge, in this paper we propose a hierarchical structure for the constraint $\mathcal{S}$, that can significantly simplify the implementation of $K(s)$. Another advantageous property of this structure is that it allows (\ref{op1}) to be reformulated as a convex optimization problem as will be unfolded in Section \RNum{3}. The basic mechanism behind the hierarchical formulation is the partitioning of subsystems into a set of non-overlapping groups or {\it clusters}. This grouping strategy can be arbitrary, and is not necessarily dictated by any system property. The constraint $\mathcal{S}$ for this hierarchy can then be formulated as follows. 
\begin{definition}
Given the index set of subsystems as $\mathcal{V} = \{ 1,...,n_{s} \}$, and an integer $r$, where $0<r\leq n_{s}$, define $r$ non-empty, distinct, and non-overlapping subsets of $\mathcal{V}$ respectively denoted as $\mathcal{I} = \{ \mathcal{I}_{1},...,\mathcal{I}_{r} \}$, such that $\mathcal{I}_{1} \cup ... \cup \mathcal{I}_{r} = \mathcal{V}$. We denote the collection of subsystems in $\mathcal{I}_{i}$, $i=1,...,r$ as a {\it cluster}.
\end{definition}
We next define two structured projection matrices $P_{u}$ and $P_{y}$ for clusters $\mathcal{I}_{1},...,\mathcal{I}_{r}$ as follows.
\begin{definition}
Given a non-zero vector $w_{u} \in \mathbb{R}^{n_{u}}$ and the input index set $\mathcal{V}_{u} {=} \{ 1,...,n_{u} \}$, define $r$ non-overlapping and non-empty subsets denoted as $\mathcal{I}^{u} = \{ \mathcal{I}^{u}_{1},...,\mathcal{I}^{u}_{r} \}$, such that $\mathcal{I}^{u}_{i}$ includes indices of all the inputs in cluster $\mathcal{I}_{i}$ for $i=1,...,r$. The input projection matrix $P_{u} \in \mathbb{R}^{r\times n_{u}}$ is defined by  
\begin{align}
P_{u,[ij]} := \begin{cases}
\frac{w_{u,[j]}}{\| w_{u,[\mathcal{I}_{i}^{u}]} \|_{2} }, \quad j\in \mathcal{I}^{u}_{i} \\
0,\quad \text{otherwise} 
\end{cases}, \label{Pdefe} 
\end{align}
where the vector $w_{u,[\mathcal{I}_{i}^{u}]}$ is non-zero. The output projection matrix $P_{y}$ is defined in the same way by vector $w_{y} \in \mathbb{R}^{n_{y}}$, output index set $\mathcal{V}_{y} {=} \{ 1,...,n_{y} \}$ and subsets $\mathcal{I}^{y} = \{ \mathcal{I}^{y}_{1},...,\mathcal{I}^{y}_{r} \}$.
\label{Pdef} 
\end{definition}

\begin{figure*}
    \centering
    \begin{subfigure}[t]{0.3\textwidth}
    \centering
        \includegraphics[width=0.9\columnwidth]{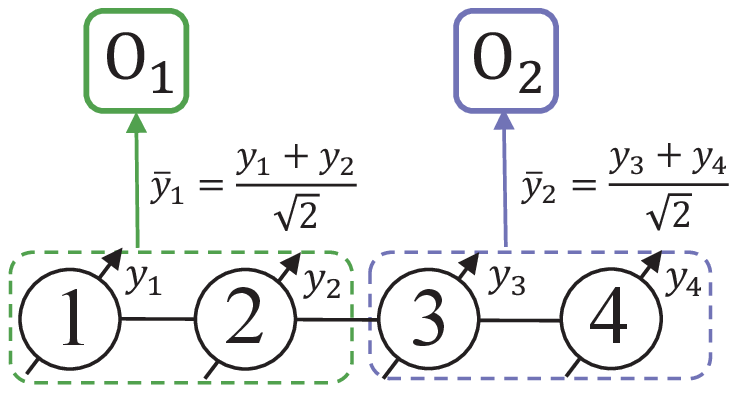}
        \caption{Step 1 - Output averaging}
        \label{cp1}
    \end{subfigure}
    ~        
    \begin{subfigure}[t]{0.3\textwidth}
    \centering
        \includegraphics[width=0.9\columnwidth]{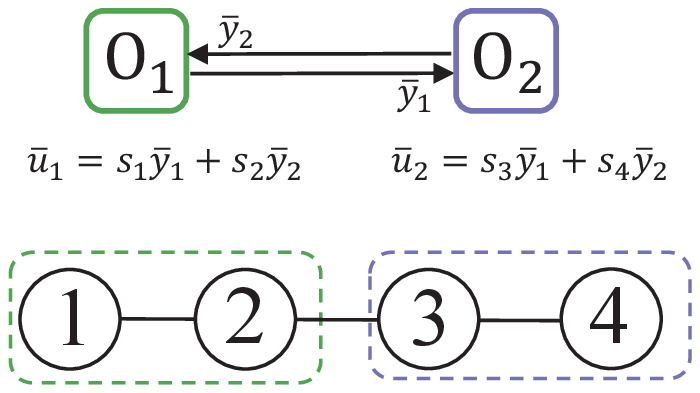}
        \caption{Step 2 - Lower-dimensional control}
        \label{cp2}
    \end{subfigure}
    ~        
    \begin{subfigure}[t]{0.3\textwidth}
    \centering
        \includegraphics[width=0.9\columnwidth]{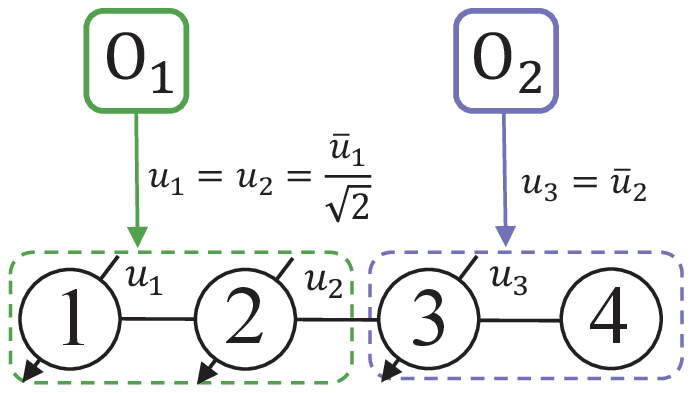}
        \caption{Step 3 - Control inversion}
        \label{cp3}
    \end{subfigure}
    \caption{Hierarchical architecture for implementing any controller in $\mathcal{S}$.}\label{cp}
\end{figure*}

Based on the projections $P_{u}$ and $P_{y}$, the constrained subspace $\mathcal{S}$ of all hierarchical controllers can be formally stated as follows.
\begin{definition}
The subspace $\mathcal{S}\in \mathbb{R}_{p}^{n_{u}\times n_{y}}$ admits a hierarchical structure defined over projection matrices $P_{u}\in \mathbb{R}^{r\times n_{u}}$ and $P_{y}\in \mathbb{R}^{r\times n_{y}}$ if there exists a lower-dimensional transfer matrix $\tilde{S}\in \mathbb{R}_{p}^{r\times r}$ such that
\begin{align}
\mathcal{S} = P_{u}^{T} \tilde{S} P_{y}.
\label{hstruct}
\end{align}

\end{definition}

The controllers in $\mathcal{S}$ contains the projection structures of $P_{y}$ and $P_{u}$, which lead to a sequential two-layer hierarchical control architecture. We illustrate this architecture along with the implementation of controllers in $\mathcal{S}$ by the following example. 
\begin{example}
Consider a networked system shown in the bottom layer of Fig. \ref{cp}, where four subsystems are indexed by $\mathcal{V}=\{ 1,2,3,4\}$, and are connected by a simple line graph. All four subsystems are assumed to have scalar outputs, and only the first three are equipped with inputs, i.e. $\mathcal{V}_{y}=\{ 1,2,3,4 \}$ and $\mathcal{V}_{u}=\{ 1,2,3 \}$. The system is partitioned into two clusters $\mathcal{I}_{1} = \{ 1,2 \}$ and $\mathcal{I}_{2} = \{ 3,4 \}$. Let both $w_{u}$ and $w_{y}$ to be vectors of all ones. Then given $\mathcal{I}_{1}^{u}=\{ 1,2\}$, $\mathcal{I}_{2}^{u}=\{ 3\}$, $\mathcal{I}_{1}^{y}=\{ 1,2\}$ and $\mathcal{I}_{2}^{y}=\{ 3,4\}$, the projection matrices can be constructed as
\begin{align*}
P_{u} = \begin{bmatrix}
\frac{1}{\sqrt{2}} & \frac{1}{\sqrt{2}} & 0 \\
0 & 0 & 1
\end{bmatrix},\ P_{y} = \begin{bmatrix}
\frac{1}{\sqrt{2}} & \frac{1}{\sqrt{2}} & 0 & 0 \\
0 & 0 & \frac{1}{\sqrt{2}} & \frac{1}{\sqrt{2}}
\end{bmatrix}.
\end{align*}
Denoting $\tilde{S} = \begin{bmatrix}
s_{1} & s_{2} \\
s_{3} & s_{4}
\end{bmatrix}$, the hierarchical subspace $\mathcal{S}$ follows as
\begin{align*}
\mathcal{S} = P_{u}^{T}\tilde{S}P_{y}  = \begin{bmatrix}
\frac{s_{1}}{2} & \frac{s_{1}}{2} & \frac{s_{2}}{\sqrt{2}} \\
\frac{s_{1}}{2} & \frac{s_{1}}{2} & \frac{s_{2}}{\sqrt{2}} \\
\frac{s_{3}}{2} & \frac{s_{3}}{2} & \frac{s_{4}}{\sqrt{2}} \\
\frac{s_{3}}{2} & \frac{s_{3}}{2} & \frac{s_{4}}{\sqrt{2}}
\end{bmatrix} . 
\end{align*}
Therefore, to implement the controller in $\mathcal{S}$, we designate two  coordinators, one for each cluster. These are denoted as $O_{1}$ and $O_{2}$ in Fig. \ref{cp}. The implementation of $\mathcal{S}$ can then be shown by the following three steps with illustrations in Fig. \ref{cp}.
\begin{itemize}
\item {\it Step 1 - Output averaging ($\bar{y} = P_{y}y$):} Each coordinator receives output measurements from its designated cluster, and computes the averaged output $\bar{y}_{i} = \sum_{j\in \mathcal{I}_{i}^{y}} \frac{w_{y,[j]} y_{j} }{\| w_{y,[\mathcal{I}_{i}^{y}]} \|_{2} }$ for $i=1,...,r$.
\item {\it Step 2 - Lower-dimensional control ($\bar{u} = \tilde{S}\bar{y}$):} Next, the coordinators communicate with each other, and exchange the averaged outputs $\bar{y}_{i}$, $i=1,...,r$. Each coordinator then computes a lower-dimensional $\mathcal{H}_{2}$ control input $\bar{u}_{i} = \tilde{S}_{i,:}\bar{y}$, $i=1,...,r$ in a distributed manner. 
\item {\it Step 3 - Control inversion ($u = P^{T}_{u}\bar{u}$):} In the final step, each coordinator broadcasts the control signal $u_{\mathcal{I}^{u}_{i}}$ to subsystems in its cluster by a simple scaling of $\bar{u}_{i}$, $i=1,...,r$.
\end{itemize}
\end{example}

By applying such a hierarchical structure, the total number of communication links required for implementing the control would be at most $n_{s}+\binom{r}{2}$. This can be much cheaper than $\binom{n_{s}}{2}$ required for the all-to-all communication in unstructured controllers. This structure also fits many common cyber-physical networks such as power systems where the subsystems may represent synchronous generators and loads, the clusters may represent utility companies, and coordinators may represent the control centers of each respective company. The proposed hierarchical implementation can also preserve data privacy between coordinators as only an averaged output $\bar{y}$ is shared between them. Therefore, no coordinator can infer the output measurements from subsystems assigned to other coordinators. 

Compared to \cite{nanacc}, where the LQG design is facilitated by only a single output projection matrix, the hierarchy in $\mathcal{S}$ is defined over both input and output projections. We next present the main results of this paper, starting with the convex reformulation for the constrained design (\ref{op1}).

\section{Convex Reformulation by Quadratic Invariance}
In this section, we present the convex reformulation for the hierarchical $\mathcal{H}_{2}$ problem (\ref{op1}) facilitated by the quadratic invariant property of $\mathcal{S}$. The following well-known results of Youla parameterization will be used to unfold the solution.
\begin{theorem}
\cite{rc} Let $F$ and $L$ be such that $A+LC_{2}$ and $A+B_{2}F$ are Hurwitz. Then all controllers that internally stabilize $G(s)$ can be parameterized by 
\begin{align}
K(s) = f(K_{nom},Q),
\end{align}
with any $Q \in \mathcal{RH}_{\infty}$ and $K_{nom}$ defined as
\begin{align}
K_{nom} = \left[
\begin{array}{c|cc}
A + B_{2}F + LC_{2} & -L & B_{2} \\ \hline
F & 0 & I \\
-C_{2} & I & 0
\end{array}
\right].
\label{nominal}
\end{align}
Furthermore, the closed-loop transfer function equals to
\begin{align}
f(G,K) = T_{11} + T_{12}QT_{21},
\label{para1}
\end{align}
where $T$ is given by
\begin{align}
\nonumber
& T = \begin{bmatrix}
T_{11} & T_{12} \\
T_{21} & T_{22}
\end{bmatrix} = \left[
\begin{array}{c|cc}
\hat{A} & \hat{B}_{1} & \hat{B}_{2} \\ \hline
\hat{C}_{1} & 0 & D_{12} \\
\hat{C}_{2} & D_{21} & 0
\end{array}
\right] \\
& \left[
\begin{array}{cc|cc}
A+B_{2}F & -B_{2}F & B_{1} & B_{2} \\
0 & A+LC_{2} & B_{1}+LD_{21} & 0 \\ \hline
C_{1}+D_{12}F & -D_{12}F & 0 & D_{12} \\
0 & C_{2} & D_{21} & 0
\end{array}
\right].
\label{para2}
\end{align}
\label{para}
\end{theorem}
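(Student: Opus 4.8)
\textit{Proof sketch.} The plan is to separate the claim into two parts: (i) that $K = f(K_{nom}, Q)$ with $Q \in \mathcal{RH}_{\infty}$ enumerates exactly the internally stabilizing controllers, and (ii) that the induced closed loop is the affine map (\ref{para1}) with the realization (\ref{para2}). For part (i) I would take the coprime-factorization route. By (A1) the gains $F$ and $L$ exist, and I would use them to build a doubly coprime factorization $G_{22} = NM^{-1} = \tilde{M}^{-1}\tilde{N}$ over $\mathcal{RH}_{\infty}$ together with Bezout factors satisfying the double Bezout identity. The classical Youla theorem then gives every stabilizing controller the fractional form $K = (U + MQ)(V + NQ)^{-1}$, $Q \in \mathcal{RH}_{\infty}$; translating the state-space realizations of these factors and applying a similarity transformation identifies this fractional form with the LFT $f(K_{nom}, Q)$ for $K_{nom}$ as in (\ref{nominal}). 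Setting $Q=0$ collapses $K_{nom}$ to the observer-based controller $\dot{\hat{x}} = (A + B_{2}F + LC_{2})\hat{x} - Ly$, $u = F\hat{x}$, a convenient check on the realization.

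For part (ii) I would work directly in state space. Denoting the plant state $x$ and controller state $\hat{x}$, the interconnection $K = f(K_{nom}, Q)$ reads $\dot{\hat{x}} = (A + B_{2}F + LC_{2})\hat{x} - Ly + B_{2}v$, $u = F\hat{x} + v$, $\eta = y - C_{2}\hat{x}$, with $v = Q\eta$. The decisive step is the change of variables to the estimation error $e := x - \hat{x}$. Substituting $u = F\hat{x} + v$ into the plant and differencing the two state equations, the $B_{2}v$ terms cancel, leaving the autonomous error dynamics $\dot{e} = (A + LC_{2})e + (B_{1} + LD_{21})w$ and $\eta = C_{2}e + D_{21}w$. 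Re-expressing the remaining plant equations in the $(x,e)$ coordinates yields $\dot{x} = (A + B_{2}F)x - B_{2}Fe + B_{2}v + B_{1}w$ and $z = (C_{1} + D_{12}F)x - D_{12}Fe + D_{12}v$, so that the open-loop map $(w,v) \mapsto (z,\eta)$ is exactly the block upper-triangular system $T$ of (\ref{para2}).

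The crux---and the reason the parameterization is \emph{affine} in $Q$---is the identity $T_{22} = 0$ for the channel $v \mapsto \eta$. This is immediate from the triangular realization just obtained: with $\hat{B}_{2} = \begin{bmatrix} B_{2} \\ 0 \end{bmatrix}$, $\hat{C}_{2} = \begin{bmatrix} 0 & C_{2} \end{bmatrix}$ and zero feedthrough, the product $\hat{C}_{2}(sI - \hat{A})^{-1}\hat{B}_{2}$ selects the lower-left $(2,1)$ block of $(sI - \hat{A})^{-1}$, which vanishes because $\hat{A}$ is block upper triangular. Hence the Redheffer feedback $v = Q\eta$ reduces from $T_{11} + T_{12}Q(I - T_{22}Q)^{-1}T_{21}$ to the affine form $T_{11} + T_{12}QT_{21}$ in (\ref{para1}). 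I expect the bookkeeping behind this $(x,e)$ reduction---keeping the block partitions of $\hat{A}$, $\hat{B}_{i}$, $\hat{C}_{i}$ aligned with (\ref{para2})---to be the only real obstacle; everything else is standard. Finally, since $A + B_{2}F$ and $A + LC_{2}$ are Hurwitz by hypothesis, $\hat{A}$ is Hurwitz, so $T \in \mathcal{RH}_{2}$ and any $Q \in \mathcal{RH}_{\infty}$ gives an internally stable closed loop, completing the equivalence.
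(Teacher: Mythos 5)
The paper gives no proof of this theorem---it is quoted directly from \cite{rc}---and your sketch is a correct reconstruction of the standard argument found there: a doubly coprime factorization to show the parameterization is exhaustive, and the change of variables to the estimation error $e = x - \hat{x}$ to obtain the block upper-triangular realization of $T$ with $T_{22}=0$, whence the affine dependence of the closed loop on $Q$. The only nit is that $T$ itself lies in $\mathcal{RH}_{\infty}$ rather than $\mathcal{RH}_{2}$ (its off-diagonal blocks carry the feedthroughs $D_{12}$ and $D_{21}$), but this does not affect the argument.
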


Given $J$, $T_{11}$, $T_{12}$ and $T_{21}$ from Theorem \ref{para}, one can rewrite the original $\mathcal{H}_{2}$ problem (\ref{op1}) into a model matching problem with respect to the Youla parameter $Q$ as
\begin{equation}
\begin{aligned}
& \mathrm{minimize}
& & \| T_{11} + T_{12}QT_{21} \|_{\mathcal{H}_{2}}  \\
& \mathrm{subject\ to}
& & Q\in \mathcal{RH}_{\infty} \\
& & & K(s)=f(K_{nom},Q)\in \mathcal{S}
\end{aligned}\quad . 
\label{op2}
\end{equation}
Note that $K(s)=f(K_{nom},Q)\in \mathcal{S}$ is not an affine constraint in $Q$, in general. However, it has been shown in \cite{qinv} that $K \in \mathcal{S}$ is equivalent to the affine constraint $Q \in \mathcal{S}$ in Youla domain if the subspace $\mathcal{S}$ is quadratic invariant under the plant model $G_{22}$. We introduce the notion of quadratic invariance according to \cite{qinv} as follows.
\begin{definition}
The subspace $\mathcal{S}$ is called quadratic invariant under $G_{22}$ if $KG_{22}K \in \mathcal{S}$ for all $K \in \mathcal{S}$.
\end{definition}

From this definition, it can be easily verified that for any controller $K = P_{u}^{T}\tilde{K}P_{y} \in \mathcal{S}$, quadratic invariance holds for our hierarchical constraint $\mathcal{S}$ under $G_{22}$ given that 
\begin{align*}
KG_{22}K=P_{u}^{T}(\tilde{K}P_{y}G_{22}P_{u}^{T}\tilde{K})P_{y} \in \mathcal{S} 
\end{align*}
for $\tilde{K}P_{y}G_{22}P_{u}^{T}\tilde{K} \in \mathbb{R}_{sp}^{r\times r}$. As $K\in \mathcal{S}$ is equivalent to $Q\in \mathcal{S}$, (\ref{op2}) becomes
\begin{equation}
\begin{aligned}
& \mathrm{minimize}
& & \| T_{11} + T_{12}QT_{21} \|_{\mathcal{H}_{2}} \\
& \mathrm{subject\ to}
& & Q\in \mathcal{RH}_{\infty},\ Q\in \mathcal{S}
\end{aligned}\quad .
\label{op3}
\end{equation}
To this end, we can find the optimal solution for (\ref{op1}) as follows.
\begin{theorem}
If $(A,B_{2}P_{u}^{T})$ is stabilizable and $(P_{y}C_{2},A)$ is detectable, the hierarchical $\mathcal{H}_{2}$ problem (\ref{op1}) admits an optimal solution $K_{opt} = P_{u}^{T}\tilde{K}_{opt}P_{y}$,
\begin{align}
\tilde{K}_{opt} = \left[
\begin{array}{c|c}
A + B_{2}P_{u}^{T}F_{2} + L_{2}P_{y}C_{2} & -L_{2}  \\ \hline
F_{2} & 0
\end{array}
\right],
\label{Kopt}
\end{align}
where $R_{1} = P_{u}D_{12}^{T}D_{12}P_{u}^{T}$, $R_{2}=P_{y}D_{21}D_{21}^{T}P_{y}^{T}$, $F_{2} = -R_{1}^{-1}P_{u}B_{2}^{T}X$ and $L_{2} = -YC_{2}^{T}P_{y}^{T}R_{2}^{-1}$. Matrices $X = X^{T}\succeq 0$ and $Y=Y^{T}\succeq 0$ are unique solutions of AREs \footnote{Removing Assumption A4 will add extra coupled terms to $A$, $R_{1}$ and $R_{2}$. Interested readers can refer to \cite{rc} for the expressions. This parameter change, however, does not affect design procedures presented in this paper. }
\begin{align}
X & = \mathrm{ARE}(A,B_{2}P_{u}^{T},C_{1},R_{1}), \label{aremain} \\
Y & = \mathrm{ARE}(A^{T},C_{2}^{T}P_{y}^{T},B_{1}^{T},R_{2}). \label{aredual}
\end{align} 
\label{t1}
\end{theorem}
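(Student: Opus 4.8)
The plan is to reduce the hierarchically constrained problem to a \emph{standard unconstrained} $\mathcal{H}_{2}$ design for a suitably modified generalized plant, and then invoke the classical solution of \cite{rc}. Starting from the convex reformulation (\ref{op3}) already obtained via quadratic invariance, I would first exploit the product form of the constraint: every $Q \in \mathcal{S}$ equals $Q = P_{u}^{T}\tilde{Q}P_{y}$ for a \emph{unique} $\tilde{Q}\in\mathcal{RH}_{2}^{r\times r}$, uniqueness following because $P_{u}$ and $P_{y}$ have full row rank $r$ (their rows have disjoint, nonempty supports by Definition \ref{Pdef}, with each $w_{u,[\mathcal{I}_{i}^{u}]}, w_{y,[\mathcal{I}_{i}^{y}]}$ nonzero). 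Substituting this factorization into the objective of (\ref{op3}) turns it into the genuinely unconstrained model-matching problem $\min_{\tilde{Q}\in\mathcal{RH}_{2}} \| T_{11} + (T_{12}P_{u}^{T})\,\tilde{Q}\,(P_{y}T_{21}) \|_{\mathcal{H}_{2}}$, and the bijection $\tilde{Q}\leftrightarrow Q$ ensures that a minimizer of one yields a minimizer of the other.

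The second step is to identify this reduced problem with a standard $\mathcal{H}_{2}$ design for the modified plant
\[
\tilde{G} = \left[\begin{array}{c|cc} A & B_{1} & B_{2}P_{u}^{T} \\ \hline C_{1} & 0 & D_{12}P_{u}^{T} \\ P_{y}C_{2} & P_{y}D_{21} & 0 \end{array}\right].
\]
Concretely, I would verify the linear-fractional identity $f(G, P_{u}^{T}\tilde{K}P_{y}) = f(\tilde{G},\tilde{K})$ via the push-through identity $\tilde{K}P_{y}(I - G_{22}P_{u}^{T}\tilde{K}P_{y})^{-1} = (I - \tilde{K}P_{y}G_{22}P_{u}^{T})^{-1}\tilde{K}P_{y}$, which shows the constrained closed loop for $G$ coincides with the unconstrained closed loop for $\tilde{G}$. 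Because $P_{u}^{T}$ and $P_{y}$ are static, the controller $K = P_{u}^{T}\tilde{K}P_{y}$ carries the same state space as $\tilde{K}$ and produces the same closed-loop $A$-matrix, so $K$ internally stabilizes $G$ exactly when $\tilde{K}$ internally stabilizes $\tilde{G}$. Hence (\ref{op1}) is equivalent to the unconstrained $\mathcal{H}_{2}$ problem for $\tilde{G}$, and the two optimizations parameterize the same affine set of closed loops.

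The crux is then to confirm that $\tilde{G}$ inherits the standard regularity assumptions so that the classical result applies verbatim. Stabilizability of $(A,B_{2}P_{u}^{T})$ and detectability of $(P_{y}C_{2},A)$ are exactly the hypotheses of the theorem, supplying A1 for $\tilde{G}$. Assumption A4 transfers immediately, since $(D_{12}P_{u}^{T})^{T}C_{1} = P_{u}D_{12}^{T}C_{1} = 0$ and $B_{1}(P_{y}D_{21})^{T} = B_{1}D_{21}^{T}P_{y}^{T} = 0$, and A3 is untouched because $B_{1},C_{1}$ are unchanged. The one point requiring the projection structure is A2: I would argue $R_{1} = P_{u}D_{12}^{T}D_{12}P_{u}^{T}\succ 0$ and $R_{2} = P_{y}D_{21}D_{21}^{T}P_{y}^{T}\succ 0$ by combining $D_{12}^{T}D_{12}, D_{21}D_{21}^{T}\succ 0$ with the full-row-rank of $P_{u},P_{y}$. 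With A1--A4 holding for $\tilde{G}$, the standard $\mathcal{H}_{2}$ solution of \cite{rc} (the same machinery behind Theorem \ref{para}) gives the optimal reduced controller with state-feedback gain $F_{2} = -R_{1}^{-1}P_{u}B_{2}^{T}X$, observer gain $L_{2} = -YC_{2}^{T}P_{y}^{T}R_{2}^{-1}$, and $X,Y$ the stabilizing solutions of the AREs (\ref{aremain}), (\ref{aredual}); this is precisely $\tilde{K}_{opt}$ of (\ref{Kopt}), whence $K_{opt} = P_{u}^{T}\tilde{K}_{opt}P_{y}$ solves (\ref{op1}).

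I expect the main obstacle to be the equivalence bookkeeping rather than any single hard calculation: making the reduction airtight requires simultaneously handling the closed-loop-map identity, the internal-stability equivalence between $G$ with $K\in\mathcal{S}$ and $\tilde{G}$ with unconstrained $\tilde{K}$, and the inheritance of the assumptions A1--A4 by $\tilde{G}$. In particular, the decisive structural fact is that the full-row-rank of the projection matrices is exactly what keeps $R_{1},R_{2}$ nonsingular, so that the two AREs are well posed and the classical formulas carry over.
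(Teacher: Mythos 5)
Your proposal is correct and follows essentially the same route as the paper: both reduce the constrained problem to an unconstrained $\mathcal{H}_{2}$ design for the projected plant $\bar{G}$ (your $\tilde{G}$) with maps $B_{2}P_{u}^{T}$, $D_{12}P_{u}^{T}$, $P_{y}C_{2}$, $P_{y}D_{21}$, and then read off the standard Riccati-based controller. The only difference is mechanical: you certify the equivalence via the controller-level push-through identity $f(G,P_{u}^{T}\tilde{K}P_{y})=f(\tilde{G},\tilde{K})$, whereas the paper stays in the Youla domain and matches $T_{12}P_{u}^{T}$ and $P_{y}T_{21}$ to the parameterization of $\bar{G}$ by choosing $F=P_{u}^{T}\bar{F}$ and $L=\bar{L}P_{y}$; your extra checks (full row rank of $P_{u},P_{y}$ giving $R_{1},R_{2}\succ 0$, and the inheritance of A1--A4) are correct and simply make explicit what the paper leaves implicit.
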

\begin{proof}
From (\ref{hstruct}), the constraint $Q\in \mathcal{S}$ implies that there exists $\tilde{Q}\in \mathbb{R}_{p}^{r\times r}$ such that $Q = P^{T}_{u}\tilde{Q}P_{y}$. By replacing $Q$ with $P^{T}_{u}\tilde{Q}P_{y}$, (\ref{op3}) becomes
\begin{equation}
\begin{aligned}
& \mathrm{minimize}
& & \| T_{11} + T_{12}P_{u}^{T}\tilde{Q}P_{y}T_{21} \|_{\mathcal{H}_{2}} \\
& \mathrm{subject\ to}
& & \tilde{Q}\in \mathcal{RH}_{\infty}
\end{aligned}.
\label{op4}
\end{equation}
Now consider an intermediate TFM $\bar{T}$ defined as 
\begin{align*}
& \bar{T} = \begin{bmatrix}
T_{11} & T_{12}P_{u}^{T} \\
P_{y}T_{21} & T_{22}
\end{bmatrix} = \\
& \left[
\begin{array}{cc|cc}
A+B_{2}F & -B_{2}F & B_{1} & B_{2}P_{u}^{T} \\
0 & A+LC_{2} & B_{1}+LD_{21} & 0 \\ \hline
C_{1}+D_{12}F & -D_{12}F & 0 & D_{12}P_{u}^{T} \\
0 & P_{y}C_{2} & P_{y}D_{21} & 0
\end{array}
\right].
\end{align*}
Given that $(A,B_{2}P_{u}^{T})$ is stabilizable and $(P_{y}C_{2},A)$ is detectable, there exists matrices $\bar{F}$ and $\bar{L}$ such that $A + B_{2}P_{u}^{T}\bar{F}$ and $A + \bar{L}P_{y}C_{2}$ are both Hurwitz. Therefore, by choosing $F$ and $L$ as $P_{u}^{T}\bar{F}$ and $\bar{L}P_{y}$ respectively in $\bar{T}$, the TFM within the $\mathcal{H}_{2}$ norm in (\ref{op4}) can be rewritten as the LFT form
\begin{align}
T_{11} + T_{12}P_{u}^{T}\tilde{Q}P_{y}T_{21} = f(\bar{G},\bar{K}), \label{LFTde}
\end{align}
where $\bar{G}$ is specified by 
\begin{align}
\bar{G} = \left[
\begin{array}{c|cc}
A & B_{1} & B_{2}P_{u}^{T} \\ \hline
C_{1} & 0 & D_{12}P_{u}^{T} \\
P_{y}C_{2} & P_{y}D_{21} & 0
\end{array}
\right]
\end{align}
and $\bar{K}$ is any stabilizing controller for $\bar{G}$ parameterized by Theorem \ref{para}. By (\ref{LFTde}), the problem (\ref{op4}) becomes an unconstrained $\mathcal{H}_{2}$ design for $\bar{G}$, which yields the optimal controller $\tilde{K}_{opt}$ as in (\ref{Kopt}). The optimal hierarchical controller, therefore, follows as $K_{opt} = P_{u}^{T}\tilde{K}_{opt}P_{y}$.
\end{proof}

Note that the hierarchical structure of $K_{opt}$ is defined by pre- and post-projections $P_{u}^{T}$ and $P_{y}$ over a lower-dimensional controller $\tilde{K}_{opt}$. This allows applications of conventional unstructured model reduction techniques for designing $\tilde{K}_{opt}$, without breaking the hierarchical structure of $K_{opt}$. We present one such reduction technique in the next section to simplify the design of $K_{opt}$.

\section{Hamiltonian-Based Approximation}
Constructing the optimal controller $K_{opt}$ requires two matrices $X$ and $Y$ from AREs (\ref{aremain}) and (\ref{aredual}). Given that $X$ and $Y$ are defined by the same ARE operator, we will present the results in this section based on $X$ only. We define the Hamiltonian matrix for (\ref{aremain}) as
\begin{align}
H {:=} \begin{bmatrix}
A & -M  \\
- C_{1}^{T}C_{1} & -A^{T}
\end{bmatrix},
\end{align}
where $M = B_{2}P_{u}^{T}R_{1}^{-1}P_{u}B_{2}^{T}$. The eigenvalues of $H$ are symmetric about the imaginary axis \cite{rc}. Suppose that $H$ is diagonalizable and that the columns of the matrix $\begin{bmatrix}
Z_{1} \\ Z_{2}
\end{bmatrix}_{2n\times n}$ span the stable invariant subspace of $H$, i.e.
\begin{align}
H \begin{bmatrix}
Z_{1} \\ Z_{2}
\end{bmatrix} = \begin{bmatrix}
Z_{1} \\ Z_{2}
\end{bmatrix} \Lambda^{-}, \label{hamieig}
\end{align}
where $\Lambda^{-} = diag([\lambda^{-}_{1},...,\lambda^{-}_{n}])$ consists of all the eigenvalues of $H$ in the left-half plane. The matrix $X$ can then be found by $X=Z_{2}Z_{1}^{-1}$ \cite{rc}. Computing the full stable eigenspaces above, however, is subject to $\mathcal{O}(n^{3})$ complexity, and thus can become unscalable for a large network. In order to make the hierarchical design applicable, we next consider the problem of finding an approximate and computationally simpler solution for $X$, noted as $\bar{X}$ (and $\bar{Y}$ for $Y$). 

We introduce a few notations based on the stable invariant subspace of $H$ defined in (\ref{hamieig}). Let $\Lambda_{\kappa}^{-} \in \mathbb{R}^{\kappa \times \kappa}$ be a principal submatrix of $\Lambda^{-}$, and denote the corresponding subspace partitions by $Z_{1\kappa}\in \mathbb{R}^{n\times \kappa}$ and $Z_{2\kappa}\in \mathbb{R}^{n\times \kappa}$, i.e. 
\begin{align*}
\Lambda^{-} = \begin{bmatrix}
\Lambda^{-}_{\kappa} & \\ & \Lambda^{-}_{\bar{\kappa}}
\end{bmatrix},\ Z_{1} = \begin{bmatrix}
Z_{1\kappa} & Z_{1\bar{\kappa}}
\end{bmatrix},\ Z_{2} = \begin{bmatrix}
Z_{2\kappa} & Z_{2\bar{\kappa}}
\end{bmatrix}.
\end{align*}
Using the lower-dimensional Hamiltonian subspaces $Z_{1\kappa}$ and $Z_{2\kappa}$, one can construct an approximate solution for (\ref{aremain}) as
\begin{align}
\bar{X} = Z_{2\kappa}(Z_{2\kappa}^{T}Z_{1\kappa})^{-1}Z_{2\kappa}^{T}.
\label{hamiapp}
\end{align}
This approximation form has been previously proposed in \cite{are1}, although the choice of the $\kappa$ eigenvalues contained in $\Lambda_{\kappa}^{-}$ was not determined. In contrast, motivated by conventional frequency-weighted model reduction, we decide the partitioning of $\Lambda^{-}$ from the minimization problem
\begin{equation}
\begin{aligned}
& \underset{\bar{X}}{ \mathrm{minimize} }
& & \| (X-\bar{X})W_{x}\|_{\mathcal{H}_{2}} 
\end{aligned},
\label{ophm}
\end{equation}
where $W_{x} = (sI - A + MX)^{-1}B_{1}$ is a frequency weight representing the closed-loop output response. Due to the non-convex nature of the model reduction, it is intractable to find an exact minimum for (\ref{ophm}). Therefore, we approach (\ref{ophm}) by an upper bound minimization derived as follows.

\begin{theorem}
Denote $E_{\kappa} {=} Z_{2\bar{\kappa}} {-} Z_{2\kappa}(Z_{2\kappa}^{T}Z_{1\kappa})^{-1}Z_{2\kappa}^{T}Z_{1\bar{\kappa}}$. The objective function in (\ref{ophm}) is upper bounded by
\begin{align}
\| (X-\bar{X})W_{x}\|_{\mathcal{H}_{2}} \leq \epsilon \| E_{\kappa}\|_{F}, \label{mainbound}
\end{align}
where $\epsilon := \sqrt{\sum_{i=\kappa + 1}^{n} C_{ii}}$, and $\mathcal{C} \in \mathbb{R}^{n\times n}$ is a Cauchy-like matrix defined by
\begin{align}
\mathcal{C}_{ij} := -\frac{1}{\lambda_{i}^{-}+\lambda_{j}^{-}} [Z_{1}^{-1}B_{1}B_{1}^{T}Z_{1}^{-T}]_{ij} . \label{cau}
\end{align} 
\label{theob}
\end{theorem}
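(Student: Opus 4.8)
The plan is to pass to the modal coordinates supplied by the Hamiltonian eigenvector matrix $Z_{1}$ and show that the weighted error collapses onto the \emph{truncated} modes only. First I would read off the top block-row of (\ref{hamieig}), namely $AZ_{1}-MZ_{2}=Z_{1}\Lambda^{-}$, and substitute $Z_{2}=XZ_{1}$ (which is just $X=Z_{2}Z_{1}^{-1}$). This gives $(A-MX)Z_{1}=Z_{1}\Lambda^{-}$, so the closed-loop matrix diagonalizes as $A-MX=Z_{1}\Lambda^{-}Z_{1}^{-1}$. Consequently the frequency weight factors into modal form,
\[
W_{x}=(sI-A+MX)^{-1}B_{1}=Z_{1}(sI-\Lambda^{-})^{-1}Z_{1}^{-1}B_{1},
\]
i.e. a diagonal resolvent sandwiched between $Z_{1}$ and $Z_{1}^{-1}B_{1}$.

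Next I would establish the key property that the error annihilates the retained subspace. Using the column identities $XZ_{1\kappa}=Z_{2\kappa}$ and $XZ_{1\bar{\kappa}}=Z_{2\bar{\kappa}}$ together with the definition (\ref{hamiapp}) of $\bar{X}$, a direct computation gives $(X-\bar{X})Z_{1\kappa}=Z_{2\kappa}-Z_{2\kappa}(Z_{2\kappa}^{T}Z_{1\kappa})^{-1}Z_{2\kappa}^{T}Z_{1\kappa}=0$ and $(X-\bar{X})Z_{1\bar{\kappa}}=E_{\kappa}$, so that $(X-\bar{X})Z_{1}=\begin{bmatrix}0 & E_{\kappa}\end{bmatrix}$. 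Multiplying by the modal form of $W_{x}$, the $Z_{1}^{-1}Z_{1}$ factors cancel and the leading zero block kills the retained modes, leaving
\[
(X-\bar{X})W_{x}=E_{\kappa}(sI-\Lambda^{-}_{\bar{\kappa}})^{-1}\Xi_{\bar{\kappa}},\qquad \Xi:=Z_{1}^{-1}B_{1},
\]
where $\Xi_{\bar{\kappa}}$ denotes the last $n-\kappa$ rows of $\Xi$. Since $\Lambda^{-}_{\bar{\kappa}}$ is Hurwitz this reduced transfer function is strictly proper and stable, hence in $\mathcal{RH}_{2}$.

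It then remains to evaluate its $\mathcal{H}_{2}$ norm. Reading $(\Lambda^{-}_{\bar{\kappa}},\Xi_{\bar{\kappa}},E_{\kappa})$ as a realization, the squared norm equals $tr(E_{\kappa}PE_{\kappa}^{T})$, where $P$ is the controllability Gramian solving $\Lambda^{-}_{\bar{\kappa}}P+P(\Lambda^{-}_{\bar{\kappa}})^{T}+\Xi_{\bar{\kappa}}\Xi_{\bar{\kappa}}^{T}=0$. Because $\Lambda^{-}$ is diagonal, this Lyapunov equation decouples entrywise and its solution is precisely the principal submatrix of the Cauchy-like matrix $\mathcal{C}$ in (\ref{cau}) indexed by $\{\kappa+1,\dots,n\}$; in particular $tr(P)=\sum_{i=\kappa+1}^{n}\mathcal{C}_{ii}=\epsilon^{2}$. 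Finally, using that $P\succeq 0$ implies $P\preceq \|P\|_{2}I$ and $\|P\|_{2}\le tr(P)$, the chain $tr(E_{\kappa}PE_{\kappa}^{T})\le \|P\|_{2}\,\|E_{\kappa}\|_{F}^{2}\le tr(P)\,\|E_{\kappa}\|_{F}^{2}=\epsilon^{2}\|E_{\kappa}\|_{F}^{2}$ yields the claimed bound (\ref{mainbound}).

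The step I expect to require the most care is identifying $\mathcal{C}$ as the Gramian of the modal pair $(\Lambda^{-},Z_{1}^{-1}B_{1})$ and checking that its truncated sub-block decouples exactly; this relies on $\Lambda^{-}$ being diagonal and on the eigenvalue-sum convention $\lambda_{i}^{-}+\lambda_{j}^{-}$ matching the (possibly complex, conjugate-paired) modal Lyapunov solution, which is the one place where careful bookkeeping of the transpose-versus-conjugate convention is needed. The second inequality $\|P\|_{2}\le tr(P)$ is deliberately loose; it trades tightness for a simple, trace-computable surrogate that makes the otherwise non-convex objective (\ref{ophm}) tractable through minimization of the right-hand side of (\ref{mainbound}).
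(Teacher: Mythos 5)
Your proposal is correct and follows essentially the same route as the paper's proof: both rest on the modal diagonalization $A-MX=Z_{1}\Lambda^{-}Z_{1}^{-1}$, the identity $(X-\bar{X})Z_{1}=\begin{bmatrix}0 & E_{\kappa}\end{bmatrix}$, and the recognition of $\mathcal{C}$ (resp.\ its trailing principal sub-block) as the closed-loop controllability Gramian in modal coordinates, differing only in that the paper bounds $\|E_{\kappa}\mathcal{C}_{\kappa+1:n,\kappa+1:n}^{1/2}\|_{F}$ by Frobenius submultiplicativity while you use $tr(E_{\kappa}PE_{\kappa}^{T})\leq \|P\|_{2}\|E_{\kappa}\|_{F}^{2}\leq tr(P)\|E_{\kappa}\|_{F}^{2}$, which yields the identical bound.
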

\begin{proof}
See Appendix.
\end{proof}

From Theorem \ref{theob}, the norm of interest $\| (X-\bar{X})W_{x} \|_{\mathcal{H}_{2}}$ is linearly bounded by a scalar $\epsilon$. The minimization in (\ref{ophm}) can then be approached by minimizing $\epsilon$ with respect to the choice of $\Lambda^{-}_{\kappa}$. It is worth emphasizing that the value of $\mathcal{C}_{ii}$ is inversely proportional to the magnitude of the closed-loop eigenvalue $\lambda^{-}_{i}$ for $i=\kappa+1,...,n$. Therefore, one heuristic way for minimizing $\epsilon$ can be to order the eigenvalues of $\Lambda^{-}$ as $0< |\lambda^{-}_{1}| \leq ...\leq |\lambda^{-}_{n}|$, and let $\Lambda_{\kappa}^{-}$ contain the first $\kappa$ eigenvalues with the smallest magnitudes. In that case, $\lambda_{\kappa +1}^{-}$ till $\lambda_{n}^{-}$ will have larger magnitudes, and thus both $\mathcal{C}_{ii}$ and $\epsilon$ will be smaller positive numbers. Hence, computation of $\bar{X}$ will require only the eigenspace of these first $\kappa$ eigenvalues, which can be completed in $\mathcal{O}(n\kappa^{2})$ time using Krylov-subspace based techniques, i.e. Arnoldi algorithm \cite{matrix}. This complexity is more tractable than $\mathcal{O}(n^{3})$ of finding an exact solution $X$. The matching between $X$ and $\bar{X}$ becomes better as the gap between $\lambda^{-}_{\kappa}$ and $\lambda^{-}_{\kappa+1}$ increases. If
\begin{align}
0< |\lambda^{-}_{1}| \leq ...\leq |\lambda^{-}_{\kappa}| \ll |\lambda^{-}_{\kappa+1}| \leq ...\leq |\lambda^{-}_{n}|,
\end{align}
which means that closed-loop network is naturally clustered into $\kappa$ coherent groups \cite{nantac}, $\epsilon$ becomes negligible, indicating a close matching between $X$ and $\bar{X}$. The computational complexity for $\bar{X}$ can be further reduced if $\kappa$ is a sufficiently small number. 

The approximate solution $\bar{X}$, however, does not come without a drawback. In general, $\bar{X}$ can only stabilize $A-M\bar{X}$ over the subspace $Z_{1\kappa}$, and hence a small $\kappa$ will not necessarily stabilize an unstable plant. Therefore, we briefly comment on the stabilizability condition of $\bar{X}$ to conclude this section. First, some useful properties of $\bar{X}$ are summarized as follows. 
\begin{lemma}
\cite{are1} The approximate solution $\bar{X}$ given by (\ref{hamiapp}) satisfies $\bar{X}\preceq X$, and its residue can be written as
\begin{align}
\mathcal{R}(\bar{X}) := A^{T}\bar{X} + \bar{X}A + C_{1}^{T}C_{1} - \bar{X}M\bar{X} =\bar{C}^{T}_{1}\bar{C}_{1}, \label{areresi}
\end{align}
where $\bar{C}^{T}_{1} = [ I - Z_{2\kappa}(Z_{2\kappa}^{T}Z_{1\kappa})^{-1}Z_{1\kappa}^{T} ]C_{1}^{T}$.
\label{hprop}
\end{lemma}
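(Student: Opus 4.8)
The plan is to prove both claims directly from the eigenrelation (\ref{hamieig}) restricted to the $\kappa$ selected columns. Writing that relation block-wise for the pair $(Z_{1\kappa},Z_{2\kappa})$ and $\Lambda^{-}_{\kappa}$ yields the two identities
\begin{align*}
A Z_{1\kappa} - M Z_{2\kappa} = Z_{1\kappa}\Lambda^{-}_{\kappa}, \quad -C_{1}^{T}C_{1}Z_{1\kappa} - A^{T}Z_{2\kappa} = Z_{2\kappa}\Lambda^{-}_{\kappa},
\end{align*}
which are the only structural facts I will need. I will also use that $X = Z_{2}Z_{1}^{-1}$ gives the column identity $Z_{2\kappa} = X Z_{1\kappa}$; combined with $X = X^{T}\succeq 0$ this shows $Z_{2\kappa}^{T}Z_{1\kappa} = Z_{1\kappa}^{T}XZ_{1\kappa}$ is symmetric (and invertible), so $\bar{X}$ in (\ref{hamiapp}) is symmetric and the shorthand $P := (Z_{2\kappa}^{T}Z_{1\kappa})^{-1}$ satisfies $P = P^{T}$.

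For the ordering $\bar{X}\preceq X$, I would substitute $Z_{2\kappa}=XZ_{1\kappa}$ into (\ref{hamiapp}) to get $\bar{X} = X Z_{1\kappa}(Z_{1\kappa}^{T}XZ_{1\kappa})^{-1}Z_{1\kappa}^{T}X$, then factor $X = X^{1/2}X^{1/2}$. With $W := X^{1/2}Z_{1\kappa}$ this becomes $\bar{X} = X^{1/2}\Pi X^{1/2}$, where $\Pi = W(W^{T}W)^{-1}W^{T}$ is the orthogonal projector onto $\mathrm{range}(W)$. Since $0\preceq \Pi\preceq I$, conjugation by $X^{1/2}$ preserves the order and gives $0\preceq \bar{X}\preceq X$, establishing the first claim.

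For the residue (\ref{areresi}), I would substitute the two Hamiltonian identities into $A^{T}\bar{X}+\bar{X}A$ and $\bar{X}M\bar{X}$ and collect. Row two replaces $A^{T}Z_{2\kappa}$ in $A^{T}\bar{X}$ (and its transpose in $\bar{X}A$), while row one gives $Z_{2\kappa}^{T}MZ_{2\kappa} = Z_{2\kappa}^{T}AZ_{1\kappa} - (Z_{2\kappa}^{T}Z_{1\kappa})\Lambda^{-}_{\kappa}$ for $\bar{X}M\bar{X}$. The elementary simplifications $P(Z_{2\kappa}^{T}Z_{1\kappa})=I$ and $P(Z_{2\kappa}^{T}Z_{1\kappa})\Lambda^{-}_{\kappa}P = \Lambda^{-}_{\kappa}P$ collapse all $\Lambda^{-}_{\kappa}$-terms, and the symmetry of $Z_{2\kappa}^{T}MZ_{2\kappa}$ collapses the remaining $A$- and $M$-terms. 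What survives is $C_{1}^{T}C_{1} - C_{1}^{T}C_{1}Z_{1\kappa}PZ_{2\kappa}^{T} - Z_{2\kappa}PZ_{1\kappa}^{T}C_{1}^{T}C_{1} + Z_{2\kappa}PZ_{1\kappa}^{T}C_{1}^{T}C_{1}Z_{1\kappa}PZ_{2\kappa}^{T}$, which factors as $(I - Z_{2\kappa}PZ_{1\kappa}^{T})C_{1}^{T}C_{1}(I - Z_{1\kappa}PZ_{2\kappa}^{T}) = \bar{C}_{1}^{T}\bar{C}_{1}$, matching (\ref{areresi}).

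The main obstacle is the residue bookkeeping: the cancellation only goes through if the substitutions from the two block-rows are made in a matched way. Eliminating $C_{1}^{T}C_{1}Z_{1\kappa}$ through row two alone, for instance, leaves a spurious commutator $Z_{2\kappa}(P\Lambda^{-}_{\kappa}-\Lambda^{-}_{\kappa}P)Z_{2\kappa}^{T}$ together with an unsymmetrized $A$-term, and it is precisely the symmetry of $Z_{2\kappa}^{T}Z_{1\kappa}$ and of $Z_{2\kappa}^{T}MZ_{2\kappa}$ (hence the hypotheses $M=M^{T}$ and $X=X^{T}$) that forces these to vanish. I would therefore organize the computation so that every $\Lambda^{-}_{\kappa}$-term is routed through $P(Z_{2\kappa}^{T}Z_{1\kappa})=I$ before collecting, keeping the expression manifestly symmetric at each stage.
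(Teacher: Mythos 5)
Your proposal is correct, but note that the paper does not actually prove this lemma: it imports it wholesale from the cited reference \cite{are1}, so there is no in-paper argument to compare against. Your derivation is a legitimate self-contained proof. The two block identities $AZ_{1\kappa}-MZ_{2\kappa}=Z_{1\kappa}\Lambda^{-}_{\kappa}$ and $-C_{1}^{T}C_{1}Z_{1\kappa}-A^{T}Z_{2\kappa}=Z_{2\kappa}\Lambda^{-}_{\kappa}$ are exactly what is needed; writing $N=Z_{2\kappa}PZ_{1\kappa}^{T}$ with $P=(Z_{2\kappa}^{T}Z_{1\kappa})^{-1}$, the first row substituted twice (once directly, once through row two via $Z_{1\kappa}^{T}A^{T}Z_{2\kappa}$) gives $Z_{2\kappa}^{T}MZ_{2\kappa}=-Z_{1\kappa}^{T}C_{1}^{T}C_{1}Z_{1\kappa}-P^{-1}\Lambda^{-}_{\kappa}-\Lambda^{-}_{\kappa}P^{-1}$, after which the $\Lambda^{-}_{\kappa}$-terms in $-\bar{X}M\bar{X}$ cancel those in $A^{T}\bar{X}+\bar{X}A$ exactly (no commutator survives), leaving $(I-N)C_{1}^{T}C_{1}(I-N^{T})=\bar{C}_{1}^{T}\bar{C}_{1}$ as you state; and the ordering claim via $\bar{X}=X^{1/2}\Pi X^{1/2}$ with $\Pi$ an orthogonal projector is clean. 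The only point to flag is that symmetry and positive semidefiniteness of $X$ give you only $Z_{2\kappa}^{T}Z_{1\kappa}=Z_{1\kappa}^{T}XZ_{1\kappa}\succeq 0$, not its invertibility; that invertibility (equivalently $\mathrm{range}(Z_{1\kappa})\cap\ker X=\{0\}$, so that $W=X^{1/2}Z_{1\kappa}$ has full column rank) is an implicit hypothesis of the construction (\ref{hamiapp}) itself and should be stated as such rather than derived. With that caveat made explicit, your argument buys the reader something the paper does not provide: a verifiable proof in place of a citation.
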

A simple stability condition then follows immediately from Lemma \ref{hprop}.
\begin{theorem}
The approximate solution $\bar{X}$ is stabilizing if $C_{1}^{T}C_{1} - \bar{C}^{T}_{1}\bar{C}_{1} \succeq 0$, and $(C_{1}^{T}C_{1} - \bar{C}^{T}_{1}\bar{C}_{1},A)$ has no unobservable modes on the imaginary axis.
\label{stab}
\end{theorem}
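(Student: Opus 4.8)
The plan is to recognize $\bar{X}$ as the exact solution of a perturbed ARE and then run a standard Lyapunov argument on the closed-loop matrix $A-M\bar{X}$, which is the matrix whose stability is at issue (recall the gain is $F_{2}=-R_{1}^{-1}P_{u}B_{2}^{T}\bar{X}$, giving closed loop $A+B_{2}P_{u}^{T}F_{2}=A-M\bar{X}$). First I would rearrange the residue identity of Lemma \ref{hprop} into the form
\begin{align*}
A^{T}\bar{X} + \bar{X}A + \bar{Q} - \bar{X}M\bar{X} = 0, \qquad \bar{Q} := C_{1}^{T}C_{1} - \bar{C}_{1}^{T}\bar{C}_{1},
\end{align*}
so that $\bar{X}$ is seen to solve a bona fide LQR-type ARE whose state-cost matrix is precisely the $\bar{Q}$ that appears in the hypotheses. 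The first hypothesis is exactly $\bar{Q}\succeq 0$, and from Lemma \ref{hprop} (together with \cite{are1}) I would use $0\preceq\bar{X}\preceq X$ to record that $\bar{X}\succeq 0$; also $M=B_{2}P_{u}^{T}R_{1}^{-1}P_{u}B_{2}^{T}\succeq 0$ since $R_{1}\succ 0$.

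Next I would compute the Lyapunov relation satisfied by the closed loop. Substituting the ARE above into $(A-M\bar{X})^{T}\bar{X}+\bar{X}(A-M\bar{X})=A^{T}\bar{X}+\bar{X}A-2\bar{X}M\bar{X}$ yields
\begin{align*}
(A-M\bar{X})^{T}\bar{X} + \bar{X}(A-M\bar{X}) = -\bigl(\bar{X}M\bar{X} + \bar{Q}\bigr) \preceq 0 .
\end{align*}
With $\bar{X}\succeq 0$ serving as a candidate Lyapunov certificate, I would then take any eigenpair $(\lambda,v)$ of $A-M\bar{X}$ and sandwich the identity to obtain $2\,\mathrm{Re}(\lambda)\,v^{*}\bar{X}v = -v^{*}(\bar{X}M\bar{X}+\bar{Q})v \le 0$. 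For a purely imaginary $\lambda=j\omega$ the left side vanishes, forcing $v^{*}\bar{Q}v=0$ and $v^{*}\bar{X}M\bar{X}v=0$; since $\bar{Q}\succeq 0$ and $M\succeq 0$ this gives $\bar{Q}v=0$ and $M\bar{X}v=0$, whence $(A-M\bar{X})v=Av=j\omega v$. Thus $(j\omega,v)$ would be an unobservable mode of $(\bar{Q},A)$ on the imaginary axis, contradicting the second hypothesis. This rules out imaginary-axis eigenvalues of $A-M\bar{X}$, which is exactly the role played by the ``no unobservable modes on the imaginary axis'' assumption.

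The main obstacle is the open right-half-plane case $\mathrm{Re}(\lambda)>0$. Because $\bar{X}$ is only guaranteed positive \emph{semi}definite, the sandwich inequality alone does not immediately exclude such $\lambda$: it only forces $v^{*}\bar{X}v=0$, i.e. $\bar{X}v=0$, and then again $Av=\lambda v$ with $\bar{Q}v=0$, an unobservable mode that lies off the imaginary axis and is therefore not forbidden by the stated hypothesis. The cleanest fix is to argue $\bar{X}\succ 0$, in which case $v^{*}\bar{X}v>0$ forces $\mathrm{Re}(\lambda)\le 0$ directly and the theorem follows at once; failing that, I would lean on the comparison $\bar{X}\preceq X$ together with the fact that the exact, stabilizing $X$ makes $A-MX$ Hurwitz (so no unstable eigenvector of $A$ can lie in $\ker(MX)$) to show that the degenerate directions $\bar{X}v=0$ with $Av=\lambda v$, $\mathrm{Re}(\lambda)>0$ cannot occur. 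Establishing this exclusion rigorously — reconciling the semidefiniteness of the Hamiltonian-based $\bar{X}$ with the weaker imaginary-axis observability assumption — is the step I expect to require the most care.
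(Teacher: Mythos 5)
Your first step is exactly the paper's entire proof: rearrange the residue identity of Lemma~\ref{hprop} into the perturbed ARE $A^{T}\bar{X}+\bar{X}A+\bar{Q}-\bar{X}M\bar{X}=0$ with $\bar{Q}=C_{1}^{T}C_{1}-\bar{C}_{1}^{T}\bar{C}_{1}$, and then appeal to the standard stabilizing-solution conditions of \cite{rc} applied to the pair $(\bar{Q},A)$. The paper stops there with a citation; you go further and unpack what that citation has to deliver. Your explicit Lyapunov computation is correct, and your treatment of the imaginary-axis case — $2\,\mathrm{Re}(\lambda)v^{*}\bar{X}v=-v^{*}(\bar{X}M\bar{X}+\bar{Q})v$ forcing $\bar{Q}v=0$, $M\bar{X}v=0$, hence $Av=j\omega v$ and an unobservable imaginary-axis mode of $(\bar{Q},A)$ — is precisely the mechanism by which the second hypothesis earns its keep.

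The open-right-half-plane difficulty you flag is genuine, and it is worth saying plainly that the paper's one-line proof does not resolve it either. The ``standard conditions'' (stabilizability of $(A,M)$ plus no imaginary-axis unobservable modes of $(\bar{Q},A)$) guarantee the \emph{existence} of a unique stabilizing solution of the perturbed ARE; they do not by themselves certify that the particular positive-semidefinite solution $\bar{X}=Z_{2\kappa}(Z_{2\kappa}^{T}Z_{1\kappa})^{-1}Z_{2\kappa}^{T}$ is that maximal solution. A scalar example shows the issue is real: $A=1$, $M=1$, $\bar{Q}=0$ satisfies both hypotheses, yet the PSD ARE solution $\bar{X}=0$ is not stabilizing. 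So closing the argument requires either $\bar{X}\succ 0$, or an additional structural fact about the Hamiltonian construction (e.g.\ that no unstable eigenvector of $A$ can lie in $\ker\bar{X}\cap\ker\bar{Q}$), neither of which follows from the stated hypotheses alone. Your proposal correctly isolates where the extra work lives; just be aware that you have not found a flaw in your own argument so much as a looseness in the theorem's stated sufficient conditions that the paper's citation papers over.
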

\begin{proof}
By moving $\bar{C}_{1}^{T}\bar{C}_{1}$ to the left hand side of (\ref{areresi}), (\ref{areresi}) becomes an ARE in the form of $A^{T}\bar{X} + \bar{X}A + C_{1}^{T}C_{1} - \bar{C}^{T}_{1}\bar{C}_{1} - \bar{X}M\bar{X} = 0$. The stabilizability for $\bar{X}$ then follows from standard conditions in \cite{rc} based on $(C_{1}^{T}C_{1} - \bar{C}^{T}_{1}\bar{C}_{1},A)$.
\end{proof}

Theorem \ref{stab} implies that one sufficient condition for stabilizing an unstable system is to choose a positive-definite matrix $C_{1}^{T}C_{1}$ with a large norm. The flip side, however, is that this would lead to a larger control effort, thereby amplifying the measurement noise in the $\mathcal{H}_{2}$ design. An alternative is to gradually increase the value of $\kappa$, and check for closed-loop stability. Theorem \ref{stab} can thus be used as a stability test, requiring computation of only $Y_{\kappa}$ and $Z_{\kappa}$ without any need for checking the eigenvalues of $A-M\bar{X}$.

\section{Design of Hierarchical Constraint $\mathcal{S}$}

The results presented so far produces a stabilizing controller $K_{opt}$, which minimizes the values of $\| f(G,K)\|_{\mathcal{H}_{2}}$ for any given choice of $\mathcal{S}$ in (\ref{op1}). In this section, we show that it is also possible to design $\mathcal{S}$ such that this norm is close to the unconstrained $\mathcal{H}_{2}$ problem, that is the problem in (\ref{op1}) without the constraint $K \in \mathcal{S}$.

\subsection{Quantification of the Optimality Gap}



The unconstrained $\mathcal{H}_{2}$ problem can be written in model matching form as
\begin{equation}
\begin{aligned}
& \mathrm{minimize}
& & \| J_{1}(Q) \|_{\mathcal{H}_{2}} = \| T_{11} + T_{12}QT_{21} \|_{\mathcal{H}_{2}} \\
& \mathrm{subject\ to}
& & Q\in \mathcal{RH}_{\infty}
\end{aligned}\quad .
\label{opu}
\end{equation}
We start with the following two lemmas. 
\begin{lemma}
The optimal solution of (\ref{opu}) is given by 
\begin{align}
Q_{*} = - W_{L} \bar{W}_{R} = - \bar{W}_{L} W_{R},
\end{align}
where $W_{L}$, $\bar{W}_{L}$, $W_{R}$ and $\bar{W}_{R}$ are defined as
\begin{align*}
& W_{L} = \left[
\begin{array}{c|c}
\hat{A} + \hat{B}_{2}\hat{F} & I \\ \hline
\hat{F} & 0
\end{array}
\right], \quad \bar{W}_{L} = \left[
\begin{array}{c|c}
\hat{A} + \hat{B}_{2}\hat{F} & \hat{F} \\ \hline
\hat{B}_{2}\hat{F} & \hat{F}
\end{array}
\right], \\
& W_{R} = \left[
\begin{array}{c|c}
\hat{A} + \hat{L}\hat{C}_{2} & \hat{L} \\ \hline
I & 0
\end{array}
\right], \quad \bar{W}_{R} = \left[
\begin{array}{c|c}
\hat{A} + \hat{L}\hat{C}_{2} & \hat{L} \\ \hline
\hat{L}\hat{C}_{2} & \hat{L}
\end{array}
\right], \\
& \hat{F} = - (D_{12}^{T}D_{12})^{-1}\hat{B}_{2}^{T}\hat{X},\ \hat{X} = \mathrm{ARE}(\hat{A},\hat{B}_{2},\hat{C}_{1},D_{12}^{T}D_{12}), \\ 
& \hat{L} {=} - \hat{Y}\hat{C}_{2}^{T}(D_{21}D_{21}^{T})^{-1},\ \hat{Y} {=} \mathrm{ARE}(\hat{A}^{T},\hat{C}_{2}^{T},\hat{B}_{1}^{T},D_{21}D_{21}^{T}),
\end{align*}
and matrices $\hat{A}$, $\hat{B}_{1}$, $\hat{B}_{2}$, $\hat{C}_{1}$ and $\hat{C}_{2}$ are defined in (\ref{para2}).
\label{spect}
\end{lemma}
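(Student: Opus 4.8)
The plan is to read (\ref{opu}) as a standard unconstrained $\mathcal{H}_{2}$ model-matching problem, solve it with the classical two-Riccati solution of \cite{rc}, and then collapse the resulting central parameter into the two factored forms claimed. The enabling structural fact is that the map $Q \mapsto T_{11}+T_{12}QT_{21}$ is affine, i.e. $T_{22}=\hat{C}_{2}(sI-\hat{A})^{-1}\hat{B}_{2}=0$; this is consistent with (\ref{para1}) and is read off directly from (\ref{para2}), since $\hat{A}$ is block upper-triangular with $\hat{C}_{2}=[\,0\ \ C_{2}\,]$ and $\hat{B}_{2}=[\,B_{2}^{T}\ \ 0\,]^{T}$. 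Hence (\ref{opu}) is exactly the $\mathcal{H}_{2}$ optimal output-feedback problem for the generalized plant (\ref{para2}) with $Q$ as the controller, and its unique minimizer is the estimator-based central parameter
\begin{align*}
Q_{*} = \left[\begin{array}{c|c} \hat{A}+\hat{B}_{2}\hat{F}+\hat{L}\hat{C}_{2} & -\hat{L} \\ \hline \hat{F} & 0 \end{array}\right],
\end{align*}
with $\hat{F},\hat{L},\hat{X},\hat{Y}$ as named in the statement. The same $Q_{*}$ can be reached by inner/co-inner factorization of $T_{12},T_{21}$ followed by orthogonal projection, which is the more literally ``spectral'' route, but the central-controller form is shorter.

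Second, the passage from $Q_{*}$ to the factored forms is purely algebraic. Writing $A_{F}:=\hat{A}+\hat{B}_{2}\hat{F}$, $A_{L}:=\hat{A}+\hat{L}\hat{C}_{2}$ and $A_{FL}:=\hat{A}+\hat{B}_{2}\hat{F}+\hat{L}\hat{C}_{2}$, I would first record the auxiliary resolvent identities $\bar{W}_{R}=(sI-\hat{A})W_{R}$ and $\bar{W}_{L}=W_{L}(sI-\hat{A})$, each obtained in one line by substituting $\hat{L}\hat{C}_{2}=A_{L}-\hat{A}$ (resp. $\hat{B}_{2}\hat{F}=A_{F}-\hat{A}$) into the given realizations and simplifying the feedthrough term. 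The crux is then the resolvent collapse
\begin{align*}
(sI-A_{F})^{-1}(sI-\hat{A})(sI-A_{L})^{-1}=(sI-A_{FL})^{-1},
\end{align*}
which I would prove by expanding $(sI-A_{L})(sI-\hat{A})^{-1}(sI-A_{F})$ and observing that the only leftover cross term equals $\hat{L}\,[\hat{C}_{2}(sI-\hat{A})^{-1}\hat{B}_{2}]\,\hat{F}=0$ by the affineness above. Substituting the two auxiliary identities gives $-W_{L}\bar{W}_{R}=-\hat{F}(sI-A_{F})^{-1}(sI-\hat{A})(sI-A_{L})^{-1}\hat{L}=-\hat{F}(sI-A_{FL})^{-1}\hat{L}$, which is exactly $Q_{*}$; since $-\bar{W}_{L}W_{R}$ reduces to the identical expression, both factored forms coincide with $Q_{*}$ simultaneously, proving the two claimed equalities at once.

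The hard part will not be this factorization but the careful invocation of \cite{rc} in the first stage. Stabilizability of $(\hat{A},\hat{B}_{2})$ and detectability of $(\hat{C}_{2},\hat{A})$ are immediate from the Hurwitz choices of $F$ and $L$ used to assemble (\ref{para2}), but the orthogonality conditions of A4 do \emph{not} transfer verbatim to the Youla data: a direct computation gives $D_{12}^{T}\hat{C}_{1}=D_{12}^{T}D_{12}\,[\,F\ \ {-}F\,]\neq 0$, and dually for $\hat{B}_{1}D_{21}^{T}$. I would therefore invoke the cross-coupled form of the $\mathcal{H}_{2}$ solution, absorbing $D_{12}^{T}\hat{C}_{1}$ into the shifted state matrix $\hat{A}-\hat{B}_{2}(D_{12}^{T}D_{12})^{-1}D_{12}^{T}\hat{C}_{1}$, and verify that the resulting gain and Riccati equation collapse to the compact cross-term-free expressions $\hat{F}$ and $\hat{X}=\mathrm{ARE}(\hat{A},\hat{B}_{2},\hat{C}_{1},D_{12}^{T}D_{12})$ quoted in the statement, with the dual reduction yielding $\hat{L}$ and $\hat{Y}$. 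Confirming this absorption, and with it that $Q_{*}$ is genuinely the optimizer rather than merely a stationary point, is the delicate bookkeeping; once it is in place the resolvent collapse of the second stage is short and self-contained.
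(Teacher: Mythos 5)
Your overall architecture is sound and, for what it is worth, far more explicit than the paper's own proof, which consists of a single sentence deferring to spectral factorization in \cite{rc}. The observations that $T_{22}=\hat{C}_{2}(sI-\hat{A})^{-1}\hat{B}_{2}=0$, that (\ref{opu}) is therefore an unconstrained $\mathcal{H}_{2}$ output-feedback problem for the generalized plant $T$, and the resolvent identities $\bar{W}_{R}=(sI-\hat{A})W_{R}$, $\bar{W}_{L}=W_{L}(sI-\hat{A})$ together with the collapse $(sI-\hat{A}-\hat{B}_{2}\hat{F})^{-1}(sI-\hat{A})(sI-\hat{A}-\hat{L}\hat{C}_{2})^{-1}=(sI-\hat{A}-\hat{B}_{2}\hat{F}-\hat{L}\hat{C}_{2})^{-1}$ (which again uses $T_{22}=0$) are all correct; they reduce both claimed factored forms to the central parameter $-\hat{F}(sI-\hat{A}-\hat{B}_{2}\hat{F}-\hat{L}\hat{C}_{2})^{-1}\hat{L}$. (As an aside, the displayed realization of $\bar{W}_{L}$ in the statement has its $B$ and $C$ entries transposed relative to the identity you use; your version is the dimensionally consistent one.)

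The genuine gap is exactly the step you defer to the end. You correctly note that $D_{12}^{T}\hat{C}_{1}=D_{12}^{T}D_{12}\,[\,F\ \ {-}F\,]\neq 0$, and you propose to show that the cross-coupled gain and Riccati equation ``collapse'' to the cross-term-free data $\hat{F}=-(D_{12}^{T}D_{12})^{-1}\hat{B}_{2}^{T}\hat{X}$, $\hat{X}=\mathrm{ARE}(\hat{A},\hat{B}_{2},\hat{C}_{1},D_{12}^{T}D_{12})$ appearing in the statement. That collapse does not occur: the two Riccati equations have different unique stabilizing solutions whenever $F\neq 0$, and they produce different transfer functions. A one-state check makes this concrete: take $A=-1$, $B_{2}=C_{2}=1$, $B_{1}=[\,1\ \ 0\,]$, $C_{1}=[\,1\ \ 0\,]^{T}$, $D_{12}=[\,0\ \ 1\,]^{T}$, $D_{21}=[\,0\ \ 1\,]$, and parameterize with $F=-1$, $L=0$. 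Inverting the Youla map on the known optimal controller $K_{opt}=-(3-2\sqrt{2})/(s+2\sqrt{2}-1)$ gives $Q_{*}=-(3-2\sqrt{2})(s+2)/(s+\sqrt{2})^{2}$, and this is reproduced by the central parameter built from the cross-coupled Riccati data; the cross-term-free data instead yield $\hat{A}+\hat{B}_{2}\hat{F}$ with spectrum $\{-\sqrt{6},-1\}$ and hence a $Q$ with a pole at $-\sqrt{6}$ that $Q_{*}$ does not possess. So your first stage, carried out as written, terminates either in the correct optimizer expressed with cross-coupled data (which is not the formula in the statement) or in the statement's formula (which is not the optimizer); the two coincide only in special cases such as $F=0$ and $L=0$, where $D_{12}^{T}\hat{C}_{1}$ and $\hat{B}_{1}D_{21}^{T}$ both vanish. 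To close the argument you must either carry the cross terms through to the final expressions or restrict the choice of $F$ and $L$ so that the orthogonality conditions hold for the plant $T$: the ``delicate bookkeeping'' you set aside is not bookkeeping but the point at which the proof, in its present form, fails.
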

\begin{proof}
The solution for the model matching problem (\ref{opu}) follows from spectral factorization. See \cite{rc}. 
\end{proof}

\begin{lemma}
The constrained $\mathcal{H}_{2}$ problem (\ref{op3}) is equivalent to the unconstrained problem 
\begin{equation}
\begin{aligned}
& \mathrm{minimize}
& & \| J_{2}(Q) \|_{\mathcal{H}_{2}} = \| T_{12}P_{u}^{T}P_{u} Q P_{y}^{T}P_{y}T_{21} \\
& & & \quad\quad\quad\quad\quad\quad + T_{11} \|_{\mathcal{H}_{2}} \\
& \mathrm{subject\ to}
& & Q \in \mathcal{RH}_{\infty}
\end{aligned}.
\label{ope}
\end{equation}
\label{equiv}
\end{lemma}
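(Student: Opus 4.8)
The plan is to show the equivalence of problems (\ref{op3}) and (\ref{ope}) by demonstrating that their feasible sets and objective functions coincide. First I would recall that (\ref{op3}) carries two constraints, $Q \in \mathcal{RH}_{\infty}$ together with the structural constraint $Q \in \mathcal{S}$, whereas (\ref{ope}) drops the structural constraint but modifies the objective by inserting the projector-like factors $P_{u}^{T}P_{u}$ and $P_{y}^{T}P_{y}$. The key observation I would exploit is that the matrices $\Pi_{u} := P_{u}^{T}P_{u} \in \mathbb{R}^{n_{u}\times n_{u}}$ and $\Pi_{y} := P_{y}^{T}P_{y} \in \mathbb{R}^{n_{y}\times n_{y}}$ are orthogonal projection matrices, i.e. they are symmetric and idempotent. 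This follows directly from Definition \ref{Pdef}: because the index sets $\mathcal{I}^{u}_{1},\dots,\mathcal{I}^{u}_{r}$ are non-overlapping and each row of $P_{u}$ has unit $2$-norm by the normalization $w_{u,[j]}/\|w_{u,[\mathcal{I}^{u}_{i}]}\|_{2}$, we get $P_{u}P_{u}^{T} = I_{r}$, and hence $\Pi_{u}^{2} = P_{u}^{T}(P_{u}P_{u}^{T})P_{u} = P_{u}^{T}P_{u} = \Pi_{u}$, with the same argument for $\Pi_{y}$.

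With this in hand I would split the proof into two directions. For the forward direction, I would take any feasible $Q$ for (\ref{op3}), so $Q = P_{u}^{T}\tilde{Q}P_{y}$ for some $\tilde{Q}$. Substituting into the objective of (\ref{ope}) and using $P_{u}P_{u}^{T} = I_{r}$ and $P_{y}P_{y}^{T} = I_{r}$, the factors collapse:
\begin{align*}
T_{12}P_{u}^{T}P_{u}QP_{y}^{T}P_{y}T_{21} = T_{12}P_{u}^{T}(P_{u}P_{u}^{T})\tilde{Q}(P_{y}P_{y}^{T})P_{y}T_{21} = T_{12}P_{u}^{T}\tilde{Q}P_{y}T_{21} = T_{12}QT_{21},
\end{align*}
so $J_{2}(Q) = J_{1}(Q)$ and the two objectives agree on the feasible set of (\ref{op3}). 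For the reverse direction, I would take an arbitrary $Q \in \mathcal{RH}_{\infty}$ feasible for (\ref{ope}) and define $Q' := P_{u}^{T}(P_{u}QP_{y}^{T})P_{y} = \Pi_{u}Q\Pi_{y}$. By construction $Q' \in \mathcal{S}$ with $\tilde{Q}' = P_{u}QP_{y}^{T}$, and $Q' \in \mathcal{RH}_{\infty}$ since $\Pi_{u}, \Pi_{y}$ are constant matrices. The crucial point is that the objective of (\ref{ope}) evaluated at $Q$ equals the objective of (\ref{op3}) evaluated at $Q'$, again because idempotency gives $\Pi_{u}Q'\Pi_{y} = \Pi_{u}Q\Pi_{y} = \Pi_{u}Q\Pi_{y}$ and $T_{12}\Pi_{u}Q\Pi_{y}T_{21} = T_{12}Q'T_{21}$. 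Hence every value attained by the objective of (\ref{ope}) is also attained by a feasible point of (\ref{op3}), and vice versa, so the two minimization problems have the same optimal value and their minimizers are related by the projection $Q \mapsto \Pi_{u}Q\Pi_{y}$.

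I expect the main obstacle to be precision rather than difficulty: the essential algebraic fact is the idempotency of $\Pi_{u}$ and $\Pi_{y}$, which reduces everything to a one-line cancellation, so the real care lies in setting up the two-directional argument cleanly so that feasible sets and objective values are matched simultaneously rather than separately. One subtlety worth stating explicitly is that $P_{u}P_{u}^{T} = I_{r}$ (and likewise for $P_{y}$) genuinely requires the normalization in Definition \ref{Pdef}; without unit-norm rows the factors would not cancel and $\Pi_{u}$ would merely be a scaled projector. I would therefore open the proof by verifying $P_{u}P_{u}^{T} = I_{r}$ and $P_{y}P_{y}^{T} = I_{r}$ from the definition, then invoke the forward and reverse substitutions above to conclude that (\ref{op3}) and (\ref{ope}) are equivalent in the sense that they share the same optimal cost and that an optimizer of one yields an optimizer of the other via the projection map.
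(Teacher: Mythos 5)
Your proof is correct, but it takes a genuinely different route from the paper's. The paper re-casts the objective of (\ref{ope}) as a lower LFT $f(\breve{G},\breve{K})$ for a modified plant $\breve{G}$ with input/output matrices $B_{2}P_{u}^{T}P_{u}$ and $P_{y}^{T}P_{y}C_{2}$, verifies stabilizability/detectability of that plant, and then uses the identity $(P_{u}D_{12}^{T}D_{12}P_{u}^{T})^{-1}=P_{u}(P_{u}^{T}P_{u}D_{12}^{T}D_{12}P_{u}^{T}P_{u})^{-1}P_{u}^{T}$ to show that the AREs for $\breve{G}$ have the same solutions $X,Y$ as those in Theorem \ref{t1}, so the optimal controller of (\ref{ope}) coincides with $K_{opt}$. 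You instead work entirely at the level of the model-matching problem: from $P_{u}P_{u}^{T}=I_{r}$ and $P_{y}P_{y}^{T}=I_{r}$ (which do hold, by the disjoint supports and unit-norm rows in Definition \ref{Pdef}) you get idempotent projectors $\Pi_{u},\Pi_{y}$, show $J_{2}(Q)=J_{1}(Q)$ on $\mathcal{S}\cap\mathcal{RH}_{\infty}$, and map any $Q\in\mathcal{RH}_{\infty}$ to the feasible point $\Pi_{u}Q\Pi_{y}$ of (\ref{op3}) with the same objective value, which pins down equality of the two optimal values in both directions. Your argument is more elementary and arguably cleaner as a proof of equivalence of optimal values; what it does not deliver, and what the paper's state-space detour buys, is the explicit identification of the optimizer of (\ref{ope}) with the state-space controller $K_{opt}$ of (\ref{Kopt}). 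For the way Lemma \ref{equiv} is actually used later (bounding $J_{2*}$ by $\|J_{2}(Q_{*})\|_{\mathcal{H}_{2}}$ in Theorem \ref{boundr}), your version suffices. Two small points of polish: the chain ``$\Pi_{u}Q'\Pi_{y}=\Pi_{u}Q\Pi_{y}=\Pi_{u}Q\Pi_{y}$'' repeats a term where you clearly intend $\Pi_{u}Q'\Pi_{y}=\Pi_{u}^{2}Q\Pi_{y}^{2}=\Pi_{u}Q\Pi_{y}=Q'$, and you should state explicitly that multiplication by the constant matrices $\Pi_{u},\Pi_{y}$ preserves membership in $\mathcal{RH}_{\infty}$ (you assert it, and it is true, but it is the one feasibility fact the reverse direction rests on).
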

\begin{proof}
See Appendix.
\end{proof}

With Lemma \ref{spect} and \ref{equiv}, we can quantify the optimality gap between the constrained and unconstrained $\mathcal{H}_{2}$ problems as follows. 
\begin{theorem}
Denote the optimal values of (\ref{opu}) and (\ref{ope}) by $J_{1*}$ and $J_{2*}$ respectively. The optimal value $J_{2*}$ is upper bounded by
\begin{align}
J_{2*}^{2} \leq J_{1*}^{2} + 2\xi J_{1*} + \xi^{2},\quad \xi = \epsilon_{1} \xi_{u} + \epsilon_{2} \xi_{y} + \epsilon_{2} \xi_{y}, \label{bound} 
\end{align}
where $\epsilon_{1} = \| T_{12}\|_{\mathcal{H}_{\infty}} \| T_{21}\|_{\mathcal{H}_{\infty}}\| \bar{W}_{R}\|_{\mathcal{H}_{\infty}}$ and $\epsilon_{2} = \| T_{12}\|_{\mathcal{H}_{\infty}} \| T_{21}\|_{\mathcal{H}_{\infty}} \| \bar{W}_{L} \|_{\mathcal{H}_{\infty}} $ are bounded scalars, and $\xi_{u}$ and $\xi_{y}$ are defined by
\begin{align*}
& \xi_{u} = \| (I {-} P_{u}^{T}P_{u})\hat{F}\Phi_{u}^{\frac{1}{2}}\|_{F}, \quad \xi_{y} = \| (I {-} P_{y}^{T}P_{y})\hat{L}^{T}\Phi_{y}^{\frac{1}{2}}\|_{F}, \\
& \Phi_{u} = \mathrm{LYAP}(\hat{A} + \hat{B}_{2}\hat{F}, I),\ \Phi_{y} = \mathrm{LYAP}((\hat{A} + \hat{L}\hat{C}_{2})^{T},I).
\end{align*}
The equality of (\ref{bound}) is attained at $\xi_{u} = \xi_{y} = 0$.
\label{boundr}
\end{theorem}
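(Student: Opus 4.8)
The plan is to certify $J_{2*}$ by evaluating the objective of (\ref{ope}) at a single well-chosen feasible point, namely the unconstrained minimizer $Q_*$ supplied by Lemma \ref{spect}. Since (\ref{ope}) is an unconstrained minimization over $Q\in\mathcal{RH}_\infty$, any fixed $Q$ yields an upper bound on $J_{2*}$; choosing $Q=Q_*$ gives $J_{2*}\le\|J_2(Q_*)\|_{\mathcal{H}_2}$. The point of using $Q_*$ is that its $J_1$-value is exactly $J_{1*}$, so writing $J_2(Q_*)=J_1(Q_*)+\Delta$ with the mismatch term $\Delta:=T_{12}(P_u^TP_uQ_*P_y^TP_y-Q_*)T_{21}$ and applying the triangle inequality gives $J_{2*}\le J_{1*}+\|\Delta\|_{\mathcal{H}_2}$. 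Squaring this scalar inequality produces $J_{2*}^2\le J_{1*}^2+2J_{1*}\|\Delta\|_{\mathcal{H}_2}+\|\Delta\|_{\mathcal{H}_2}^2$, so the entire theorem reduces to proving the single estimate $\|\Delta\|_{\mathcal{H}_2}\le\xi$.

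To bound $\|\Delta\|_{\mathcal{H}_2}$ I would first expand the projection mismatch additively as
\begin{align*}
P_u^TP_uQ_*P_y^TP_y-Q_* &= (P_u^TP_u-I)Q_*(P_y^TP_y-I) \\
&\quad + (P_u^TP_u-I)Q_* + Q_*(P_y^TP_y-I),
\end{align*}
which is the decomposition responsible for the three summands in $\xi$. Wrapping each term in $T_{12}(\cdot)T_{21}$ and applying the triangle inequality together with the submultiplicative estimates $\|AB\|_{\mathcal{H}_2}\le\|A\|_{\mathcal{H}_\infty}\|B\|_{\mathcal{H}_2}$ and $\|AB\|_{\mathcal{H}_2}\le\|A\|_{\mathcal{H}_2}\|B\|_{\mathcal{H}_\infty}$ isolates one projection complement per term. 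The crucial device is to substitute the appropriate spectral factorization of $Q_*$ from Lemma \ref{spect}: for the term carrying $(P_u^TP_u-I)$ alone on the left I would use $Q_*=-W_L\bar{W}_R$ so that the complement multiplies the bare factor $W_L$, while for the two terms carrying $(P_y^TP_y-I)$ on the right I would use $Q_*=-\bar{W}_LW_R$ so that the complement multiplies $W_R$. For the cross term the leftover standalone factor $(P_u^TP_u-I)$ contributes only its largest singular value, which is at most $1$ since $P_u^TP_u$ is a constant symmetric idempotent (its rows are orthonormal by the normalization in Definition \ref{Pdef}); the same holds for $P_y^TP_y$. Collecting the surviving $\mathcal{H}_\infty$ norms then reproduces exactly the scalars $\epsilon_1$ and $\epsilon_2$ of the statement, attaching $\epsilon_1\xi_u$ to the first term and $\epsilon_2\xi_y$ to each of the remaining two.

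The step I expect to be the crux is identifying the two residual $\mathcal{H}_2$ factors with $\xi_u$ and $\xi_y$. Reading off the realizations in Lemma \ref{spect}, $W_L$ has input matrix $I$ and output matrix $\hat{F}$ over the Hurwitz dynamics $\hat{A}+\hat{B}_2\hat{F}$, so $(I-P_u^TP_u)W_L$ has controllability Gramian $\Phi_u=\mathrm{LYAP}(\hat{A}+\hat{B}_2\hat{F},I)$ and hence $\|(I-P_u^TP_u)W_L\|_{\mathcal{H}_2}^2=\mathrm{tr}\big((I-P_u^TP_u)\hat{F}\Phi_u\hat{F}^T(I-P_u^TP_u)\big)=\xi_u^2$, where symmetry of $I-P_u^TP_u$ and the factorization $\Phi_u=\Phi_u^{1/2}\Phi_u^{1/2}$ turn the trace into the Frobenius norm $\xi_u$. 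Dually, $W_R$ has output matrix $I$ and input matrix $\hat{L}$, so $W_R(I-P_y^TP_y)$ has observability Gramian $\Phi_y=\mathrm{LYAP}((\hat{A}+\hat{L}\hat{C}_2)^T,I)$ and $\|W_R(I-P_y^TP_y)\|_{\mathcal{H}_2}=\xi_y$. Assembling the three bounds yields $\|\Delta\|_{\mathcal{H}_2}\le\epsilon_1\xi_u+\epsilon_2\xi_y+\epsilon_2\xi_y=\xi$, which closes (\ref{bound}). The delicate points here are matching input/output Gramians to the correct $W_L/W_R$ realization and keeping the symmetric projection on both sides of the Gramian so the Frobenius form emerges cleanly.

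Finally, for the equality claim at $\xi_u=\xi_y=0$, I would observe that $\Phi_u,\Phi_y\succ0$, being Gramians of stable realizations driven by the identity, so $\xi_u=0$ forces $(I-P_u^TP_u)\hat{F}=0$ and $\xi_y=0$ forces $(I-P_y^TP_y)\hat{L}^T=0$. Propagating these through the realizations gives $P_u^TP_uW_L=W_L$ and $W_RP_y^TP_y=W_R$, whence $P_u^TP_uQ_*P_y^TP_y=Q_*$, i.e. $Q_*\in\mathcal{S}$ and $\Delta=0$, so $J_{2*}\le J_{1*}$. The reverse inequality $J_{2*}\ge J_{1*}$ holds unconditionally because for every $Q$ one has $J_2(Q)=J_1(P_u^TP_uQP_y^TP_y)$ with $P_u^TP_uQP_y^TP_y\in\mathcal{RH}_\infty$ feasible for (\ref{opu}); hence $J_{2*}=J_{1*}$ and (\ref{bound}) holds with equality.
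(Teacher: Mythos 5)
Your proof is correct and follows essentially the same route as the paper: evaluate $J_{2}$ at the unconstrained optimizer $Q_{*}$, split $J_{2}(Q_{*})$ into $J_{1}(Q_{*})$ plus a three-term projection-mismatch error, bound that error by $\xi$ using the two spectral factorizations of $Q_{*}$ and the Gramian identities for $\xi_{u}$ and $\xi_{y}$, and finish with the triangle/Cauchy--Schwarz expansion of the square. In fact you supply the details the paper leaves implicit --- the explicit estimate $\|\Delta\|_{\mathcal{H}_{2}}\leq \epsilon_{1}\xi_{u}+\epsilon_{2}\xi_{y}+\epsilon_{2}\xi_{y}$ term by term, and the unconditional inequality $J_{2*}\geq J_{1*}$ needed to turn the $\xi_{u}=\xi_{y}=0$ case into a genuine equality --- so the argument is, if anything, more complete than the paper's own.
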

\begin{proof}
See Appendix.
\end{proof}

Smaller values of $\xi_{u}$ and $\xi_{y}$ will result in a smaller value of $\xi$, and therefore, in a tighter gap between $J_{1*}$ and $J_{2*}$. Also note that $\xi_{u}$ and $\xi_{y}$ are monotonic functions of $P_{u}$ and $P_{y}$, respectively. Therefore, as a relaxation we design $P_{u}$ and $P_{y}$ from 
\begin{equation}
\begin{aligned}
& \underset{P_{u}}{\mathrm{minimize}}
& & \xi_{u} 
\end{aligned} \quad \& \quad \begin{aligned}
& \underset{P_{y}}{\mathrm{minimize}}
& & \xi_{y} 
\end{aligned}. \label{xi}
\end{equation}

\subsection{Design of $P_{u}$ and $P_{y}$}

The projections $P_{u}$ and $P_{y}$ are defined over two variables, i.e. the clustering sets $(\mathcal{I}^{u}, \mathcal{I}^{y})$ and clustering weights $(w_{u},w_{y})$. 
Under the scope of this paper, we consider solving only the clustering sets $(\mathcal{I}^{u}, \mathcal{I}^{y})$ from (\ref{xi}), as these sets define the combinatorial structure of $\mathcal{S}$. To gain full degree of freedom in designing $(\mathcal{I}^{u}, \mathcal{I}^{y})$, we make the following choice of $(w_{u},w_{y})$ to satisfy the stabilizability and detectability conditions required for Theorem \ref{t1}.
\begin{lemma}
$(A,B_{2}P_{u}^{T})$ is stabilizable and $(P_{y}C_{2},A)$ is detectable if $w_{u} = V_{l}v_{u}$ and $w_{y} = V_{r}v_{y}$ for any $v_{u}$ and $v_{y}$ that satisfy
\begin{align}
V_{l}^{T}B_{2}B_{2}^{T}V_{l} v_{u} \neq \mathbf{0}, \quad V_{r}^{T}C_{2}^{T}C_{2}V_{r} v_{y} \neq \mathbf{0}, \label{pbhw}
\end{align}
where $\mathbf{0}$ is a vector with all zero entries. Matrices $V_{l}$ and $V_{r}$ denote the left and right eigenvector matrices such that $A^{T}V_{l} = V_{l}\Lambda_{+}$, $AV_{r} = V_{r}\Lambda_{+}$, and $\Lambda_{+}\succeq 0$ is a diagonal matrix that contains all unstable eigenvalues (including imaginary eigenvalues) of $A$. 
\label{lemmaw}
\end{lemma}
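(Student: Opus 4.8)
The natural instrument here is the Popov--Belevitch--Hautus (PBH) test, and since the two assertions are duals of one another I would prove the stabilizability of $(A,B_{2}P_{u}^{T})$ in full and then obtain the detectability of $(P_{y}C_{2},A)$ by the mirror argument on the pair $(A^{T},C_{2}^{T}P_{y}^{T})$. First I would record the two structural facts about $P_{u}$ that carry the whole argument. From \eqref{Pdefe} the rows of $P_{u}$ have pairwise-disjoint cluster supports and unit norm, so $P_{u}P_{u}^{T}=I_{r}$ and $\Pi_{u}:=P_{u}^{T}P_{u}$ is the orthogonal projection onto the $r$-dimensional row space of $P_{u}$. A direct cluster-wise computation then gives $\Pi_{u}w_{u}=w_{u}$, i.e. $w_{u}\in\mathrm{range}(\Pi_{u})$; because $\Pi_{u}$ is an orthogonal projection onto a subspace that contains $\mathrm{span}\{w_{u}\}$, this yields the positive-semidefinite domination $\Pi_{u}\succeq w_{u}w_{u}^{T}/\|w_{u}\|_{2}^{2}$.

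With these facts in hand I would reduce the structured pair to a single-input pair. Using the standard equivalence that $(A,B)$ and $(A,BB^{T})$ have the same stabilizability, $(A,B_{2}P_{u}^{T})$ is stabilizable if and only if $(A,B_{2}\Pi_{u}B_{2}^{T})$ is. The domination above propagates to $B_{2}\Pi_{u}B_{2}^{T}\succeq (B_{2}w_{u})(B_{2}w_{u})^{T}/\|w_{u}\|_{2}^{2}$, equivalently $\mathrm{range}(B_{2}w_{u})\subseteq\mathrm{range}(B_{2}P_{u}^{T})$, so the unstable modes reached by the single column $B_{2}w_{u}$ are also reached by $B_{2}P_{u}^{T}$. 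It therefore suffices to show that the single-input pair $(A,B_{2}w_{u})$ is stabilizable, which by PBH amounts to proving that no left eigenvector $z$ of $A$ with an eigenvalue in the closed right-half plane satisfies $z^{*}B_{2}w_{u}=0$.

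The last step is where I would pass to the unstable modal coordinates, and it is the step I expect to be the main obstacle. Since the columns of $V_{l}$ span the left eigenspace of $A$ for the unstable and imaginary-axis eigenvalues collected in $\Lambda_{+}$, every such $z$ lies in $\mathrm{range}(V_{l})$, and the choice $w_{u}=V_{l}v_{u}$ confines the excitation to exactly this subspace. Writing the Gram form $V_{l}^{T}B_{2}B_{2}^{T}V_{l}=(B_{2}^{T}V_{l})^{T}(B_{2}^{T}V_{l})$ shows that condition \eqref{pbhw} is equivalent to $B_{2}^{T}V_{l}v_{u}=B_{2}^{T}w_{u}\neq\mathbf{0}$, which I would use to forbid $B_{2}w_{u}$ from being orthogonal to the unstable left-eigenspace and hence to exclude a destabilizing $z$; the delicate part is converting this single aggregate inequality into a genuine PBH certificate mode-by-mode, which forces one to treat repeated eigenvalues and complex-conjugate imaginary-axis modes carefully rather than relying only on the lumped condition. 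Finally, the detectability of $(P_{y}C_{2},A)$ follows verbatim on the dual pair, replacing $B_{2}$, $P_{u}$, $V_{l}$ by $C_{2}^{T}$, $P_{y}$, $V_{r}$, invoking $P_{y}P_{y}^{T}=I_{r}$, $P_{y}^{T}P_{y}w_{y}=w_{y}$, the right eigenvectors $AV_{r}=V_{r}\Lambda_{+}$, and the companion requirement $V_{r}^{T}C_{2}^{T}C_{2}V_{r}v_{y}\neq\mathbf{0}$, so that the resulting weights indeed meet the hypotheses of Theorem \ref{t1}.
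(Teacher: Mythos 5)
Your preparatory reductions are correct and already go beyond what the paper records (the paper omits the proof, remarking only that the lemma ``follows directly from PBH test''): the identities $P_{u}P_{u}^{T}=I_{r}$ and $P_{u}^{T}P_{u}w_{u}=w_{u}$ do follow from the disjoint cluster supports in \eqref{Pdefe}, and the range inclusion $\mathrm{range}(B_{2}w_{u})\subseteq\mathrm{range}(B_{2}P_{u}^{T})$ validly reduces the claim to the stabilizability of the single-input pair $(A,B_{2}w_{u})$. The problem is the step you yourself flag as ``delicate'': it is not a technicality to be smoothed over but the entire content of the lemma, and your proposal contains no argument for it. As you observe, condition \eqref{pbhw} is equivalent to the single aggregate statement $B_{2}^{T}w_{u}\neq\mathbf{0}$. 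The PBH test, however, demands $z^{*}B_{2}P_{u}^{T}\neq 0$ for \emph{every} closed-right-half-plane left eigenvector $z$ of $A$ separately, and one nonzero inner product against the unstable left eigenspace as a whole cannot certify this once $\Lambda_{+}$ contains two or more eigenvalues. Concretely, take $A=\mathrm{diag}(1,2)$, $B_{2}=I_{2}$, a single cluster $\mathcal{I}_{1}^{u}=\{1,2\}$ (so $r=1$), $V_{l}=I_{2}$, $v_{u}=(1,0)^{T}$: condition \eqref{pbhw} holds, yet $B_{2}P_{u}^{T}=(1,0)^{T}$ leaves the mode $\lambda=2$ with left eigenvector $e_{2}=(0,1)^{T}$ unreachable. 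So the gap you identify cannot be closed from the stated hypothesis alone; your honest flag is in fact pointing at a genuine deficiency in the lemma as written.

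What your reduction actually needs is the mode-by-mode version of \eqref{pbhw}: when the unstable eigenvalues are distinct, the columns of $V_{l}$ are (up to scale) the only unstable left eigenvectors, and it then suffices that every \emph{entry} of the vector $V_{l}^{T}B_{2}B_{2}^{T}V_{l}v_{u}=V_{l}^{T}B_{2}(B_{2}^{T}w_{u})$ be nonzero, not merely that the vector differ from $\mathbf{0}$. With that strengthened hypothesis (and with separate care for complex-conjugate pairs and for repeated eigenvalues, which a rank-one $B_{2}w_{u}$ can never handle when a geometric multiplicity exceeds one), your reduction to $(A,B_{2}w_{u})$ finishes the proof immediately. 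Two smaller points: the equation $w_{u}=V_{l}v_{u}$ only typechecks when $n_{u}=n$, since $V_{l}$ has $n$ rows while $w_{u}\in\mathbb{R}^{n_{u}}$ --- this holds in the paper's numerical example but should be stated before you write $B_{2}^{T}V_{l}v_{u}=B_{2}^{T}w_{u}$; and the dual argument for $(P_{y}C_{2},A)$ is indeed verbatim once the primal one is repaired.
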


The proof is omitted as this lemma follows directly from PBH test. The implication of Lemma \ref{lemmaw} is that when the open-loop system (\ref{full}) is not stable, vectors $w_{u}$ and $w_{y}$ can be chosen to meet linear inequalities in (\ref{pbhw}), which can be done by linear programming. When the open-loop system (\ref{full}) is stable, the stabilizability and detectability conditions become trivial. In this case, one can choose $w_{u}$ and $w_{y}$ to be any non-zero vectors, e.g. a vector of all ones. 

Once $(w_{u},w_{y})$ are fixed, it has been shown in our recent paper \cite{nantac} that the minimization (\ref{xi}) with respect to $(\mathcal{I}^{u},\mathcal{I}^{y})$ is equivalent to an unsupervised clustering optimization. A simple yet efficient heuristic algorithm to solve this problem is weighted k-means \cite{kmeans}. Therefore, one can design the clustering sets by simply providing data matrices $(\hat{F}\Phi_{u}^{\frac{1}{2}},\hat{L}^{T}\Phi_{y}^{\frac{1}{2}})$, weight vectors $(w_{u},w_{y})$, and number of clusters $r$ as inputs to the k-means algorithm, i.e.
\begin{align}
\mathcal{I}^{u} = \mathrm{kmeans}(\hat{F}\Phi_{u}^{\frac{1}{2}}, w_{u}, r) \ \& \ \mathcal{I}^{y} = \mathrm{kmeans}(\hat{L}^{T}\Phi_{y}^{\frac{1}{2}}, w_{y}, r).
\label{kmeansop}
\end{align}

\section{Numerical Example}

\begin{figure}
    \centering
    \includegraphics[width=0.8\columnwidth]{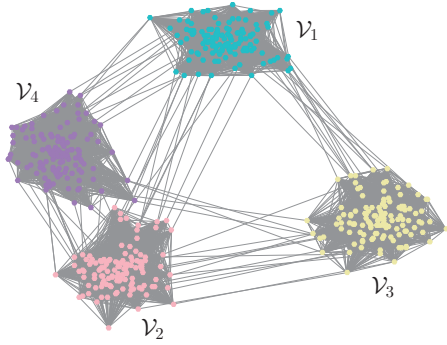}
    \vspace{-1em}
    \caption{Network topology for a $500$-node network.}
    \label{clustplot}
    \vspace{-1.5em}
\end{figure}

In this section we verify the performance of our proposed designs using a first-order consensus network. The plant $G_{22}$ is described by $n_{s}=500$ integrators that are interconnected by the network topology shown in Fig. \ref{clustplot}, where each node represents a single integrator. The dynamics at each integrator is modeled by 
\begin{align*}
\dot{x}_{i} = u_{i} + \sum_{j=1,j\neq i}^{n_{s}} a_{ij}(x_{j} - x_{i}), \quad i=1,...,n_{s},
\end{align*}
where $u_{i}$ is a scalar control input, and $a_{ij}\geq 0$ represents the connection strength between the $i^{th}$ and $j^{th}$ oscillators. We assume the state $x_{i}$ to be measurable, i.e., $y_{i} = x_{i}$. The design parameters $C_{1}$ and $B_{1}$ are chosen as identity matrices scaled by $10$, and $D_{12}$ and $D_{21}$ are chosen as identity matrices. For the hierarchical constraint $\mathcal{S}$, the weight vectors $(w_{u},w_{y})$ are chosen as vectors of all ones, which allow $P_{u}$ and $P_{y}$ to satisfy the stabilizability and detectability conditions in Theorem \ref{t1}. The clustering sets $\mathcal{I}^{u}$ and $\mathcal{I}^{y}$ are selected as four groups of integrators indexed by $\mathcal{V}_{1} = \{ 1,...,125 \}$, $\mathcal{V}_{2} = \{ 126,...,250 \}$, $\mathcal{V}_{3} = \{ 251,...,375 \}$ and $\mathcal{V}_{4} = \{ 376,...,500 \}$, where inside the groups integrators are densely connected while the groups themselves are sparsely connected, as shown in Fig. \ref{clustplot}. This results in four naturally coherent clusters. Given this constraint $\mathcal{S}$, the optimal hierarchical controller $K_{opt}$ requires $3.6418$ seconds to solve, and yields an $\mathcal{H}_{2}$ norm $\| f(G,K_{opt})\|_{\mathcal{H}_{2}} = 2.3191$. 

We first verify the Hamiltonian-based approximation with respect to choices of $\kappa$ from $1$ to $6$, and plot the corresponding computation costs and $\mathcal{H}_{2}$ norms yielded by the approximated controllers in Fig. \ref{hamiplot}. For comparison, $\mathcal{H}_{2}$ norms shown on the right axis are normalized/divided by $2.3191$ from the optimal case. As $\kappa$ increases, the normalized $\mathcal{H}_{2}$ norm decreases, and becomes close to $1$ when $\kappa \geq 4$, indicating a close performance matching between the approximated controller and $K_{opt}$. On the left axis, the computation cost for the approximated controller scales up as $\kappa$ increases. When $\kappa \leq 4$, the computation costs are all under $0.3$ seconds, which are much lower compared with $3.6418$ seconds for $K_{opt}$. Therefore, at the intersection point $\kappa =4$, one can design an approximated controller in a much lower cost, yet still get the similar performance as $K_{opt}$. This performance is mainly facilitated by the $4$ smallest dominant eigenvalues resulting from the natural clustering of the open-loop network shown in Fig. \ref{clustplot}. The $4$ dominant eigenvalues also explain the sudden increase in computation time at $\kappa =5$, where the convergence of Arnoldi algorithm become slower in searching the one extra non-dominant eigenvalue. It is worth mentioning that if the open-loop system is not naturally clustered, the $\mathcal{H}_{2}$ norm can increase. The computation cost, however, will not change significantly.
\begin{figure}
    \centering
    \includegraphics[width=0.9\columnwidth]{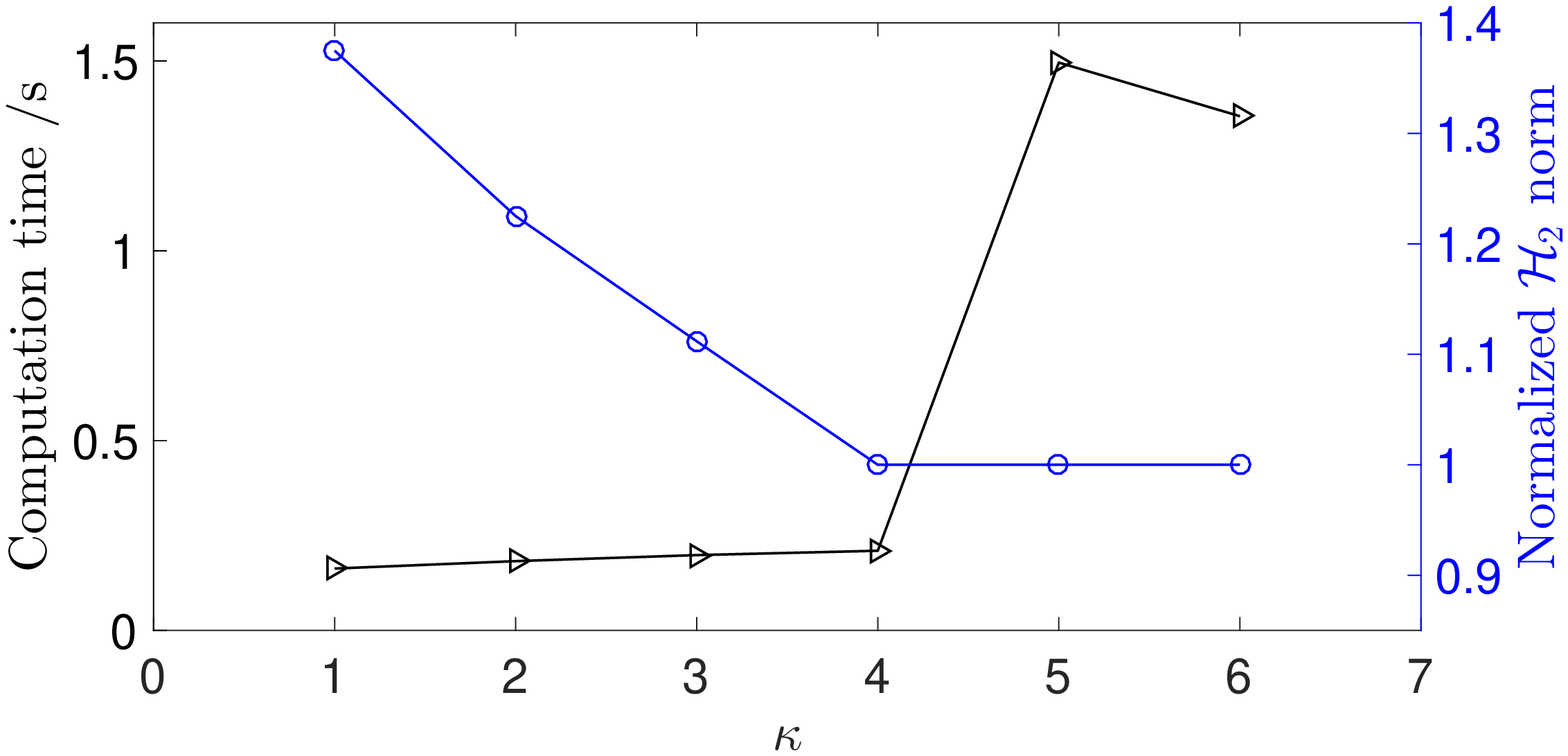}
    \vspace{-0.5em}
    \caption{Performance results for Hamiltonian-based reduction.}
    \label{hamiplot}
    \vspace{-0.5em}
\end{figure}
\begin{figure}
    \centering
    \includegraphics[width=0.9\columnwidth]{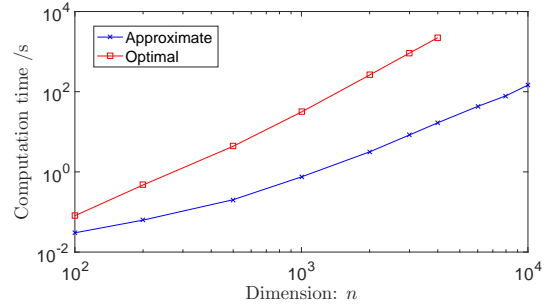}
    \vspace{-0.5em}
    \caption{Scalability results for computation time}
    \label{scala}
    \vspace{-1.5em}
\end{figure}
\begin{figure}
\vspace{-1em}
    \centering
    \includegraphics[width=0.9\columnwidth]{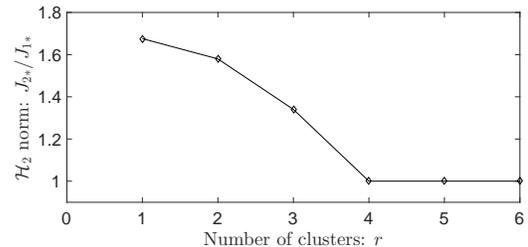}
    \vspace{-0.5em}
    \caption{Optimality results for design of $\mathcal{S}$}
    \label{Splot}
\end{figure}

The scalability of our design is further tested upon consensus networks whose dimensions range from $100$ to $10000$ with $a_{ij}$ generated randomly. The resulting computation costs for solving the Hamiltonian-based approximation versus that of $K_{opt}$ are plotted in Fig. \ref{scala}. As $n$ scales up, the approximate approach becomes remarkably cheaper than the optimal case. For example, at $n=4000$, solving the optimal case is already beyond our computing capacity, while the approximate approach can be completed within only $17$ seconds. This verifies the tractability of our design in handling large-scale networks.

We finally illustrate the design of $\mathcal{S}$ for the same $500$-node network. The clustering sets $(\mathcal{I}^{u},\mathcal{I}^{y})$ in this example are determined from (\ref{kmeansop}) with respect to $r=1,...,6$, and then used to design the controller $K_{opt}$. We show in Fig. \ref{Splot} the ratio between $J_{2*}$ and $J_{1*}$, which as defined in Section \RNum{5} are the resulting optimal values of constrained and unconstrained $\mathcal{H}_{2}$ problems respectively. The ratio $J_{2*}/J_{1*}$ as shown in Fig. \ref{Splot} decreases monotonically, and becomes close to $1$ at $r=4$. The clusters generated by $r=4$ actually resembles the four coherent groups in Fig. \ref{clustplot}.  This verifies our proposed design in finding a hierarchical structure that tightens the gap between constrained and unconstrained $\mathcal{H}_{2}$ problems.

\section{Conclusion}
In this paper we presented a hierarchical $\mathcal{H}_{2}$ optimal controller that provides benefits for both design and implementation for large-scale network dynamic systems. The advantage of the hierarchical structure is that it satisfies quadratic invariance property for any generic network model, and allows a convex reformulation for the original problem. The reformulated design can be simplified by conventional controller reduction techniques without breaking the hierarchical structure of the controller. In addition, we proposed a preliminary design of the clustering sets to tighten the gap between unconstrained and constrained $\mathcal{H}_{2}$ problems. Our future work will be to extend this design to the clustering weights.

\appendix

\subsection*{Proof of Theorem \ref{theob}}

By applying the definition of $\mathcal{H}_{2}$ norm, the objective function in (\ref{ophm}) can be rewritten as
\begin{align*}
\| (X-\bar{X})W_{x}\|_{\mathcal{H}_{2}} = \| (X-\bar{X})\Phi^{\frac{1}{2}} \|_{F},
\end{align*}
where $\Phi = \Phi^{\frac{1}{2}}\Phi^{\frac{T}{2}} = \mathrm{LYAP}(A-MX,B_{1})$. Note that from (\ref{hamieig}) we can find that $A-MX = Z_{1}\Lambda^{-}Z_{1}^{-1}$, which means $\Lambda^{-}$ contains all the eigenvalues of closed-loop state matrix $A-MX$ on the eigenspace $Z_{1}$. Hence, the controllability Gramian $\Phi$ follows directly from \cite{antoulas} as $\Phi = Z_{1}\mathcal{C}Z_{1}^{T}$. Also given the expansion of the error matrix $(X-\bar{X})$ as
\begin{align*}
(X-\bar{X}) = \begin{bmatrix}
0 & Z_{\bar{\kappa}} - Z_{2\kappa}(Z_{2\kappa}^{T}Z_{1\kappa})^{-1}Z_{2\kappa}^{T}Z_{1\bar{\kappa}}
\end{bmatrix} Z_{1}^{-1},
\end{align*}
the norm $\| (X-\bar{X})\Phi^{\frac{1}{2}}\|_{F}$ follows
\begin{align*}
& \| (X-\bar{X})\Phi^{\frac{1}{2}}\|_{F} = \sqrt{tr[(X-\bar{X})\Phi (X-\bar{X})]} \\
&\  = \sqrt{tr(E_{\kappa} \mathcal{C}_{\kappa+1:n,\kappa+1:n} E_{\kappa}^{T})} = \| E_{\kappa} \mathcal{C}_{\kappa+1:n,\kappa+1:n}^{\frac{1}{2}} \|_{F} \\
&\  \leq \| E_{\kappa} \|_{F} \| \mathcal{C}_{\kappa+1:n,\kappa+1:n}^{\frac{1}{2}} \|_{F} = \| E_{\kappa} \|_{F} \sqrt{tr(\mathcal{C}_{\kappa+1:n,\kappa+1:n})}. 
\end{align*}
Expanding the trace above then yields (\ref{mainbound}).

\subsection*{Proof of Lemma \ref{equiv}}

Given that $P_{u}$ and $P_{y}$ are orthonormal matrices, it can be verified from PBH test that $(A,B_{2}P_{u}^{T}P_{u})$ is stabilizable and $(P_{y}^{T}P_{y}C_{2},A)$ is detectable. Following the same rationale in the proof of Theorem \ref{t1}, $J_{2}$ can be rewritten as $J_{2} =  f(\breve{G},\breve{K}) $, where $\breve{G}$ follows 
\begin{align*}
\breve{G} = \left[
\begin{array}{c|cc}
A & B_{1} & B_{2}P_{u}^{T}P_{u} \\ \hline
C_{1} & 0 & D_{12}P_{u}^{T}P_{u} \\
P_{y}^{T}P_{y}C_{2} & P_{y}^{T}P_{y}D_{21} & 0
\end{array}
\right]
\end{align*}
and $\breve{K}$ is any stabilizing controller for $\breve{G}$ parameterized by Theorem \ref{para}. An important property is that the inversion of the matrix $P_{u}D_{12}^{T}D_{12}P_{u}^{T}$ (or $P_{y}D_{21}D_{21}^{T}P_{y}^{T}$) follows
\begin{align*}
(P_{u}D_{12}^{T}D_{12}P_{u}^{T})^{-1} = P_{u}(P_{u}^{T}P_{u}D_{12}^{T}D_{12}P_{u}^{T}P_{u})^{-1}P_{u}^{T}.
\end{align*}
Given this equality, we can find that
\begin{align*}
\mathrm{ARE}(A,B_{2}P_{u}^{T}P_{u}, C_{1}, P_{u}^{T}P_{u}D_{12}^{T}D_{12}P_{u}^{T}P_{u}) = X, \\
\mathrm{ARE}(A^{T},C_{2}^{T}P_{y}^{T}P_{y}, B_{1}^{T}, P_{y}^{T}P_{y}D_{21}D_{21}^{T}P_{y}^{T}P_{y}) = Y,
\end{align*}
from which it can be verified that the state space solution of (\ref{ope}) is same as $K_{opt}$ from Theorem \ref{t1}. Therefore, (\ref{ope}) is equivalent to (\ref{op3}).

\subsection*{Proof of Theorem \ref{boundr}}

Denote the complement of matrices $P_{u}$ and $P_{y}$ by $\bar{P}_{u}$ and $\bar{P}_{y}$ respectively, the function $J_{1}(Q)$ can be written as
\begin{align*}
J_{1}(Q) = T_{11} + T_{12}(P_{u}^{T}P_{u} + \bar{P}_{u}^{T}\bar{P}_{u})Q(P_{y}^{T}P_{y} + \bar{P}_{y}^{T}\bar{P}_{y})T_{21}.
\end{align*}
By letting $Q=Q_{*}$, we can write 
\begin{align*}
\| J_{2}(Q_{*})\|_{\mathcal{H}_{2}}^{2} = J_{1*}^{2} + \| J_{e}\|_{\mathcal{H}_{2}}^{2} + 2 \| J_{e}^{\frac{1}{2}}J_{1}^{\frac{1}{2}}(Q_{*}) \|_{\mathcal{H}_{2}}^{2},
\end{align*}
where $J_{e} {=} T_{12}P_{u}^{T}P_{u}Q_{*}\bar{P}_{y}^{T}\bar{P}_{y}T_{21} {+} T_{12}\bar{P}_{u}^{T}\bar{P}_{u}Q_{*}P_{y}^{T}P_{y}T_{21} + T_{12}\bar{P}_{u}^{T}\bar{P}_{u}Q_{*}\bar{P}_{y}^{T}\bar{P}_{y}T_{21}$. Note that $J_{e} = 0$ when $\xi_{u} = \xi_{y} = 0$, or equivalently $\bar{P}_{u}^{T}\bar{P}_{u}Q_{*} = Q_{*}\bar{P}_{y}^{T}\bar{P}_{y} = 0$. As a result, the equation above yields the equality $J_{1*} = \| J_{2}(Q_{*})\|^{2}_{\mathcal{H}_{2}} = J_{2*}$. The bound in (\ref{bound}) simply follows the inequality
\begin{align*}
J_{2*}^{2} & \leq \| J_{2}(Q_{*})\|_{\mathcal{H}_{2}}^{2} = J_{1*}^{2} + \| J_{e}\|_{\mathcal{H}_{2}}^{2} + 2 \| J_{e}^{\frac{1}{2}}J_{1}^{\frac{1}{2}}(Q_{*}) \|_{\mathcal{H}_{2}}^{2} \\
& \leq J_{1*}^{2} + \| J_{e}\|_{\mathcal{H}_{2}}^{2} + 2J_{1*}\| J_{e}\|_{\mathcal{H}_{2}} \leq J_{1*}^{2} + \xi^{2} + 2\xi J_{1*} . 
\end{align*}

\end{document}